\theoremstyle{plain}
\newtheorem{theorem}{Theorem}[section]
\newtheorem{lemma}[theorem]{Lemma}
\theoremstyle{definition}
\newtheorem{remark}{Remark}
\theoremstyle{remark}
\newcommand{\bR}{\mathbb{R}}
\newcommand{\vv}[1]{\mathbf{#1}} 
\newcommand{\eps}{\varepsilon}
\providecommand{\keywords}[1]{\textbf{\textit{Keywords---}} #1}
\title{An MM-type algorithm for estimation of semiparametric finite mixture models with a copula-based dependence structure}
\author{Michael Levine} %
\date{%
Department of Statistics, Purdue Univesity, 150 N. University St., West Lafayette, IN 47907, USA\\\texttt{mlevins@purdue.edu}}
\begin{document}
\maketitle
\begin{abstract}
In this manuscript, we consider a finite nonparametric mixture model with non-independent marginal density functions. Dependence between the marginal densities is modeled using a copula device. Until recently, no deterministic algorithms capable of estimating components of such a model have been available. A deterministic algorithm that is capable of this has been proposed by this author earlier. That algorithm seeks to maximize a smoothed nonparametric penalized log-likelihood; it seems to perform well in practice but does not possess the monotonicity property. In this manuscript, we introduce a deterministic MM (Minorization-Maximization) algorithm for estimation of components of this model that is also maximizing a smoothed penalized nonparametric log-likelihood but that is monotonic with respect to this objective functional. Besides the convergence of the objective functional, results on the convergence of sequences of density functions generated by this algorithm are also established. The behavior of this algorithm is illustrated using both simulated datasets and a real dataset. The results illustrate performance that is at least comparable to the earlier algorithm mentioned above. A discussion of the results and possible future research directions make up the last part of the manuscript. 
\end{abstract}
\keywords{Finite density mixture, MM algorithm, Nonparametric smoothed likelihood}

\section{Introduction}
Consider a general finite density mixture model 
\begin{equation}\label{model}
g(\vv x)=\sum_{j=1}^{m}\lambda_{j}f_j(\vv x)
\end{equation}
where the number of components $m$ is known beforehand. We assume that $\vv x=(x_1,\ldots,x_d)^{'}\in \bR^{d}$. As usual, it is assumed that the weights are positive and that they add up to $1$: $\lambda_j> 0$, $\sum_{j=1}^{m}\lambda_j=1$. We view this model as a nonparametric mixture model where individual components $f_{j}$ do not belong to any parametric family. The number of components is fixed since model selection procedures for nonparametric finite density mixtures are not well developed yet; some of the (very limited) relevant research can be found in \cite{kasahara2014non} and \cite{kwon2021estimation}.

The model \eqref{model} is known to be identifiable under certain conditions if each of the densities $f_j$ is equal to the product of the marginal densities: $f_j(\vv x)=\prod_{k=1}^{d}f_{jk}(x_k)$ for every $j=1,\ldots,m$. That latter assumption is known as the conditional independence assumption. The nonparametric finite mixture model with conditionally independent marginals has been originally introduced in \cite{Hall_Zhou} and its identifiability conditions worked out in \cite{Allman_Matias_Rhodes_2009}. In particular, identifiability conditions obtained in \cite{Allman_Matias_Rhodes_2009} require that the number of dimensions $d\ge 3$. This model has been studied in detail in e.g. \cite{benaglia2009like, Biometrika_2011LevineHunterChauveau, zheng2020nonparametric, Bonhomme_2016}. Note that independence assumption is not always a realistic one, however: it is almost always violated in, e.g. RNA-seq data \cite{rau2015co}. Thus, estimation of finite nonparametric mixtures with non-independent marginals seems desirable from the practical viewpoint. 

In order to accommodate this lack of independence, we will use a copula device to model the dependence structure between marginals. As in \cite{levine2024smoothed}, we assume that every $d$-dimensional multivariate cumulative distribution function corresponding to a component of the mixture model \eqref{model}, is represented as a copula of the corresponding marginal cumulative distribution functions. In other words, let $F_{j1}(x_1),\ldots,F_{jd}(x_d)$ be the marginal cumulative distribution functions of the cumulative distribution function $F_{j}(x_1,\ldots,x_d)$ corresponding to the density $f_{j}(x_1,\ldots,x_d)$. According to the Sklar's theorem \cite{nelsen2007introduction} p $18$, we have $F_{j}(x_1,\ldots,x_d)=C_{j}(F_{j1}(x_1),\ldots,F_{jd}(x_d))$ for a function $C_{j}:[0,1]^{d}\rightarrow [0,1]$ which is called a copula. The copula $C_{j}$ is, effectively, a $d$-dimensional cumulative distribution function with uniform marginal distributions. Differentiating this expression $d$ times, one obtains the representation of the $j$th multivariate density function that we need:
\begin{equation}\label{cmpt}
f_{j}(x_1,\ldots,x_d)=c_{j}(F_{j1}(x_1),\ldots,F_{jd}(x_d))\prod_{k=1}^{d}f_{jk}(x_k)
\end{equation}
where $c_{j}$ is the density of the copula $C_{j}$. In this manuscript, we work under the parametric assumption concerning copulas $c_{j}$: each one of them is viewed as belonging to a {\bf known} parametric family indexed by a parameter $\rho_{j}$. This assumption makes the use of the index $j$ unnecessary for copula densities. The index will only be retained for the copula parameter. 

Let $\bm\lambda=(\lambda_{1},\ldots,\lambda_{m})^{'}$ and $\bm\rho=(\rho_{1},\ldots,\rho_{m})^{'}$. Then, we can define $\bm\theta=(\bm\lambda^{'},\bm\rho^{'})^{'}$ to be the vector of finite-dimensional parameters to be estimated. We also denote $\vv f$ the vector of  multivariate densities $f_{j}$, $j=1,\ldots,m$. 
\eqref{model}-\eqref{cmpt} together define a class of mixture models that we investigate. In general, this model is not identifiable. 

Proving identifiability of non- and semiparametric finite density mixtures in general is known to be a  difficult problem. To the best of our knowledge, no results at all exist for nonparametric  finite mixtures of multivariate components where the dependence is modeled using copulas. Moreover, there are no known results even for purely parametric copula-based mixture models, such as those studied in \cite{arakelian2014clustering, kosmidis2016model}. If the main task is a parameter estimation, identifiability is an important problem. In this respect, the simulation section of this manuscript is reassuring in that the true parameters with which the data were simulated could be recovered. Thus, our proposed algorithm seems to belong to a group of seemingly well-behaved algorithms proposed for models where identifiability is not guaranteed see e.g. \cite{levine2024smoothed,zhu2019clustering,mazo2019constraining}. Note also that, from a clustering viewpoint, the identifiability problem is less important. In particular, a number of neural network models that are known to be unidentifiable have been used successfully for clustering purposes. 

Note that, in general, estimation of nonparametric mixture models is a fairly new topic of research. Some of the modern overviews of the subject can be found in e.g. \cite{xiang2019overview} and \cite{hunter2024unsupervised}. In particular, there has been very limited research on estimation of nonparametric mixture models with conditionally non-independent components which is exactly the topic considered in this manuscript. Some of the relevant publications are \cite{mazo2017semiparametric}, \cite{mazo2019constraining}, and also \cite{levine2024smoothed}. Also, some manuscripts proposing methods for models with conditionally independent marginals propose ideas that allow extension of relevant algorithms to the case of blocks of dependent marginals making up the entire set of available marginal distributions. The identifiability issue of one such model was considered already in \cite{Allman_Matias_Rhodes_2009}. Algorithms that consider these kinds of models have also been considered in e.g. \cite{hunt2003mixture} and \cite{chauveau2016nonparametric}. \cite{mazo2017semiparametric} considers a special case of the general nonparametric mixture model that allows for a rather general dependence between marginals while assuming that the marginals themselves belong to a location family; \cite{mazo2019constraining}  extends the previous setting to the case where marginals belong to a location-scale family. The algorithms proposed in both of those manuscripts are stochastic and do not optimize any particular objective function. \cite{levine2024smoothed} suggests an algorithm capable of estimating all of the elements of a nonparametric model, including copula parameters describing the dependence between marginal distribution, without the location-scale assumption on the marginal distributions. The proposed algorithm is also deterministic, unlike the algorithms of \cite{mazo2017semiparametric} and \cite{mazo2019constraining}. However, it is not monotonic with respect to any objective functional. Our main contribution in this manuscript is that we propose a deterministic algorithm that can estimate components of a general nonparametric mixture model with conditionally non-independent marginals, like the one considered in \cite{levine2024smoothed}, but that is also monotonic with respect to the appropriate nonparametric smoothed penalized likelihood. The proposed algorithm also preserves the known copula structure.  

The rest of the manuscript is structured as follows. Section $2$ introduces the algorithm we propose that can estimate a general multivariate nonparametric finite mixture model with the dependence between marginals modeled using copulas. We show that the proposed algorithm is monotonic with respect to a certain nonparametric smoothed likelihood. Section $3$ provides some additional results concerning convergence of the functional sequences defined by the proposed algorithm. Section $4$ analyzes performance of our algorithm using some simulation studies. Section $5$ presents an application to a real dataset. Finally, the conclusion section provides a discussion of obtained results and suggests some possible directions for future research.

\section{Algorithm}\label{algo}

\subsection{General algorithm}
In this subsection, we only consider the general model \eqref{model} without explicit specification of the marginal dependence structure. We begin with introducing the objective functional of our algorithm. First, we define a subset of a linear vector function space 
\[
{\cal F}=\{\vv f=(f_1,\ldots,f_{m})^{'}:0<f_j\in L_{1}(\Omega),\log f_{j}\in L_{1}(\Omega),j=1,\ldots,m\}
\]
where $\Omega\subset \bR^{d}$ is a subset of the $d$-dimensional Euclidean space $\bR^{d}$. To proceed, we take $K(\vv u)$ to denote some kernel density function in the space $\bR^{d}$, that is we assume that $K(\vv u)\geq 0$ and $\int K(\vv u)\,d\vv u=1$. Let $H$ be a positive definite symmetric $d\times d$ bandwidth matrix. In this manuscript, we use $|H|$ to denote the determinant of a generic matrix $H$. Then, a rescaled version of the kernel density function is $K_{H}(\vv u):=|H|^{-1/2}K(H^{-1/2}\vv u)$. Also, we define a linear smoothing operator ${\cal S}$ for any function $f$ as 
\[
{\cal S}f(\vv x)=\int_{\Omega}K_{H}(\vv x-\vv u)f(\vv u)\,d\vv u.
\]
For our purposes, we consider a situation where a smoothing kernel used for smoothing $f_i$ in the definition of the operator ${\cal S}$ may be different from the one used to smooth $f_j$ when $i\ne j$. In such a case, the kernel will acquire an additional subscript $j$ as in $K_{j,H}$. To reflect this situation, we also define an extended linear operator ${\cal S}\vv f=({\cal S}_{1}f_1,\ldots,{\cal S}_{m}f_m)^{'}$. We also define a nonlinear smoothing vector-valued operator ${\cal N}$ as 
\[
{\cal N}\vv f(\vv x)=\left({\cal N}_{1}f_{1}(\vv x),\ldots,{\cal N}_{m}f_{m}(\vv x)\right)
\]
where 
\begin{equation}
{\cal N}_{j}f_{j}(\vv x)=\exp\{({\cal} S_{j}\log f_j)(\vv x)\}=\exp\left\{\int_{\Omega}K_{j,H}(\vv x-\vv u)\log f_j(\vv u)\,d\vv u\right\}
\end{equation}
for $j=1,\ldots,m$ and kernel functions potentially depending on $j$. We may note here that the idea of smoothing the logarithm of the density function goes back to \cite{silverman1986density}, where a penalty based on the second derivative of the log-density is discussed. 
To simplify and shorten the notation, we also introduce the finite mixture operators
\[
{\cal M}_{\bm\lambda}\vv f(\vv x)=\sum_{j=1}^{m}\lambda_jf_j(\vv x)
\]
and 
\[
{\cal M}_{\bm\lambda}{\cal N}\vv f(\vv x)=\sum_{j=1}^{m}\lambda_j{\cal N}_{j}f_j(\vv x).
\]
Note that, according to the above notation, ${\cal M}_{\bm\lambda}\vv f(\vv x)=g(\vv x)$.  Now, we are ready to introduce the appropriate objective functional 
\begin{equation}\label{sm_lik}
l(\vv f,\bm\lambda)=\int_{\Omega}g(\vv x)\log \frac{g(\vv x)}{({\cal M}_{\bm\lambda}{\cal N}\vv f)(\vv x)}\,d\vv x.
\end{equation}

In what follows, we would like to clarify why the functional \eqref{sm_lik} makes a sensible choice. We begin with some new notation. Let us use the notation 
\[
D(a|b)=\int \left\{a(\vv x)\log \frac{a(\vv x)}{b(\vv x)}+b(\vv x)-a(\vv x)\right\}\,d\vv x
\] for the generalized Kullback-Leibler divergence between arbitrary non-negative functions $a(\vv x)$ and $b(\vv x)$. As was shown earlier in \cite{Biometrika_2011LevineHunterChauveau}, this functional can be represented as 
\begin{equation}\label{KL_rep}
\mathnormal{l}(\vv f,\bm\lambda)=D(g|{\cal M}_{\bm\lambda}{\cal N}\vv f)+1-\sum_{j=1}^{m}\lambda_j\int {\cal N}_{j}f_j(\vv x)\,d\vv x
\end{equation}
To make the notation more concise, we will omit the subscript $\Omega$ under the integral sign from now on.

By Jensen's inequality, the smoothed ``subdensity" ${\cal M}_{\bm\lambda}{\cal N}\vv f(\vv x)$ has an integral that is less than or equal to $1$, although it is still non-negative.
 Thus, the first additive term in \eqref{KL_rep} is analogous to a regular likelihood function for the ``model" defined by the subdensity $\sum_{j=1}^{m}\lambda_{j}{\cal N}_{j}f_{j}(x_j)$. We place the word ``model" in quotation marks since it does not define the true probabilistic model of the data generating process. The second term can be viewed as a ``penalty" functional that is always non-negative, is equal to zero if no smoothing occurs and is positive for any non-zero bandwidth, again due to Jensen's inequality. In other words, the second term becomes larger when the integral of the ``non-smooth part" of the density function $\sum_{j=1}^{m}\lambda_{j}f_{j}(x_j)$ becomes smaller. Thus, the functional $l(\vv f,\bm\lambda)$ can be viewed as a type of {\it smoothed nonparametric penalized log-likelihood}. We would like to note here that terms ``maximum penalized likelihood estimation" and ``maximum smoothed likelihood estimation" in the nonparametric contexts have been used before e.g. \cite{eggermont2001maximum} pp. 7-8.

Our goal is to find a minimizer of $l(\vv f,\bm\lambda)$ subject to the constraint that $\lambda_{j}> 0$ and $\sum_{j=1}^{m}\lambda_{j}=1$. We achieve this goal by deriving an iterative algorithm that possesses a descent property with respect to the functional $l(\vv f,\bm\lambda)$. In other words, we want to ensure that the value of the functional  does not decrease from one iteration to the next. 

Let us define an iteration operator $G_{\bm\lambda}\vv f(\vv x)=\left\{G_{\bm\lambda,1}f_{1}(\vv x),\ldots,G_{\bm\lambda,m}f_{m}(\vv x)\right\}$ where 
\begin{equation}\label{it_op}
G_{j}f_j(\vv x)=\alpha_{j}\int K_{j,H}(\vv x-\vv u)\frac{g(\vv u){\cal N}_{j}f_j(\vv u)}{{\cal M}_{\bm\lambda}{\cal N}\vv f(\vv u)}\,d\vv u
\end{equation}
where $\alpha_j$ is a proportionality constant that makes $G_{\bm\lambda,j}f_j(\vv x)$ integrate to $1$. 

Let the starting point be $(\vv f^{0},\bm\lambda^{0})$. When using this notation, the individual initial weights are denoted $\{\lambda_{j}^{0}\}_{j=1}^{m}$ and the individual density functions are $\{f_{j}^{0}\}_{j=1}^{m}$. Our goal is to find $(\vv f,\bm\lambda)$ such that the difference $l(\vv f^{0},\bm\lambda^{0})-l(\vv f,\bm\lambda)\ge 0$; if this inequality is true, it would imply the negative smoothed nonparametric likelihood $l(\vv f,\bm\lambda)$ decreases after this iteration. In the same way as for $(\vv f^{0},\bm\lambda^{0})$ earlier, we introduce $\{\lambda_{j}\}_{j=1}^{m}$ and $\{f_{j}\}_{j=1}^{m}$. 
 
 To proceed, let us first denote $w_{j}^{0}(\vv x):=\frac{\lambda_{j}^{0}{\cal N}_{j}f_{j}^{0}(\vv x)}{{\cal M}_{\bm\lambda^{0}}{\cal N}\vv f^{0}(\vv x)}$. Note that $\{w_{j}^{0}(\vv x)\}$ is a sequence of positive weights that adds up to $1$, so $\sum_{j=1}^{m}w_j^{0}(\vv x)=1$. We now prove the lemma establishing monotonicity of the proposed general algorithm. For simplicity, we establish it in the case where $\Omega$ is a compact subset of $\bR^{d}$. 
\begin{lemma}\label{basic_l}
Let $\vv f=\{f_{j}\}_{j=1}^{m}\in {\cal F}$ where $f_j=G_{j}f_{j}^{0}$, $j=1,\ldots,m$. Let also $\bm\lambda=\{\lambda_{j}\}_{j=1}^{m}$ where $\lambda_{j}=\int g(\vv x)w_{j}^{0}(\vv x)\,d\vv x $. Finally, we assume that $\Omega\subset \bR^{d}$ is compact. Then,
\[
l(\vv f^{0},\bm\lambda^{0})-l(\vv f,\bm\lambda)\ge 0.
\]
\end{lemma} 
\begin{proof}
By definition, 
\begin{align*}
&l(\vv f^{0},\bm\lambda^{0})-l(\vv f,\bm\lambda)=\int g(\vv x)\log \frac{{\cal M}_{\bm\lambda}{\cal N}\vv f(\vv x)}{{\cal M}_{\bm\lambda^{0}}{\cal N}\vv f^{0}(\vv x)}\,d\vv x\\
&=\int g(\vv x)\log \sum_{j=1}^{m}\frac{\lambda_{j}^{0}{\cal N}_{j}f_{j}^{0}(\vv x)}{{\cal M}_{\bm\lambda^{0}}{\cal N}\vv f^{0}(\vv x)}\frac{\lambda_{j}{\cal N}_{j}f_{j}(\vv x)}{\lambda_{j}^{0}{\cal N}_{j}f_{j}^{0}(\vv x)}\,d\vv x.
\end{align*}

Thus, using concavity of the logarithm function in conjunction with Jensen's inequality, one finds out that
\[
l(\vv f^{0},\bm\lambda^{0})-l(\vv f,\bm \lambda) \ge \int g(\vv x)\sum_{j=1}^{m}w_{j}^{0}(\vv x)\log \frac{\lambda_{j}{\cal N}_{j}f_{j}(\vv x)}{\lambda_{j}^{0}{\cal N}_{j}f_{j}^{0}(\vv x)}\,d\vv x.
\]
By definition of the nonlinear smoothing operator ${\cal N}$, we have the ratio 
\[
\frac{{\cal N}_{j}f_j(\vv x)}{{\cal N}_{j}f_j^{0}(\vv x)}=\frac{{\cal N}_{j}G_{j}f_j^{0}(\vv x)}{{\cal N}_{j}f_j^{0}(\vv x)}=\exp\left\{\int K_{j,H}(\vv x-\vv u)\log\frac{G_{j}f_j^{0}(\vv u)}{f_j^{0}(\vv u)}\,d\vv u\right\}.
\]
 Splitting the logarithmic function of the product in the above, and using the full expression for weights $w_{j}^{0}$ we find that 
\begin{align}\label{dif1}
&l(\vv f^{0},\bm\lambda^{0})-l(\vv f,\bm\lambda)\ge \sum_{j=1}^{m}\log\frac{\lambda_j}{\lambda_j^{0}}\int g(\vv x)\frac{\lambda_j^{0}{\cal N}_{j}f_j^{0}(\vv x)}{{\cal M}_{\bm\lambda^{0}}{\cal N}\vv f^{0}(\vv x)}\,d\vv x\\
&+\sum_{j=1}^{m}\lambda_j^{0}\int g(\vv x)\frac{{\cal N}_{j}f_j^{0}(\vv x)}{{\cal M}_{\bm\lambda^{0}}{\cal N}\vv f^{0}(\vv x)}\left[\int K_{j,H}(\vv x-\vv u)\log \frac{f_j(\vv u)}{f_j^{0}(\vv u)}\,d\vv u\right]\,d\vv x\nonumber
\end{align}

The second term in the above can be rewritten, using Fubini theorem, as 
\[
\sum_{j=1}^{m}\lambda_j^{0}\int \log\frac{f_j(\vv u)}{f_j^{0}(\vv u)}\left[\int K_{j,H}(\vv x-\vv u)g(\vv x)\frac{{\cal N}_{j}f_j^{0}(\vv x)}{{\cal M}_{\bm\lambda^{0}}{\cal N}\vv f^{0}(\vv x)}\,d\vv x\right]\,d\vv u.
\]
The quantity inside the square brackets is equal to, by \eqref{it_op}, $\frac{f_j(\vv u)}{\alpha_j}$. Therefore, the second term becomes 
\[
\sum_{j=1}^{m}\frac{\lambda_j^{0}}{\alpha_j}\int f_j(\vv u)\log\frac{f_j(\vv u)}{f_j^{0}(\vv u)}\,d\vv u=\sum_{j=1}^{m}\frac{\lambda_j}{\alpha_j}D(f_j|f_j^{0})\ge 0.
\]

Note that the first term on the right-hand side of \eqref{dif1} can be written as $\sum_{j=1}^{m}\log\frac{\lambda_j}{\lambda_j^{0}}\int g(\vv x)w_{j}^{0}(\vv x)\,d\vv x$. It can be maximized with respect to $\lambda_j$, $j=1,\ldots,m$ under the constraint $\sum_{j=1}^{m}\lambda_j=1$ using the standard Lagrange multipliers method. The resulting maximizer is $\lambda_j=\int g(\vv x)w_{j}^{0}(\vv x)\,d\vv x$. Since the choice of $\lambda_j=\lambda_j^{0}$ makes this term equal to zero, it is clear that the value of this term when $\lambda_j=\int g(\vv x)w_{j}^{0}(\vv x)\,d\vv x$, $j=1,\ldots,m$  is greater then or equal to zero. Therefore, indeed, $l(\vv f^{0},\bm\lambda^{0})-l(\vv f,\bm \lambda)\ge 0$.
\end{proof}
\begin{remark}
Note also that, even if the weights $\lambda_j$ are left unchanged between iterations, the difference between the values of smoothed regularized nonparametric likelihood $l(\vv f,\bm\lambda)$ remains non-negative. Indeed, assuming that $\hat{\bm \lambda}=\bm\lambda$ identically, one will be left with only the second term on the right hand side of \eqref{dif1}. Maximizing this term with respect to $\vv f$, one obtains the same result as above.
\end{remark}
\begin{remark}
The algorithm proposed is somewhat similar to that proposed in \cite{Biometrika_2011LevineHunterChauveau} except that it does not operate at the level of marginal densities but rather at the level of multivariate density component functions. In \cite{Biometrika_2011LevineHunterChauveau}, we had to design the algorithm in such a way that the conditional independence of marginal densities is preserved from one iteration to the other since it was an outstanding feature of the model we investigated. No such concern is present when dealing with the model \eqref{model} and so we operate at the level of multivariate density components themselves.
\end{remark}
\begin{remark}
The assumption of compactness of $\Omega\subset \bR^{d}$ has been made for convenience of illustration. If one wants to remove it, some additional assumptions have to be made to guarantee that the integral $\int g(\vv x)\frac{{\cal N}_{j}f_j^{0}(\vv x)}{{\cal M}_{\bm\lambda^{0}}{\cal N}\vv f^{0}(\vv x)}\left[\int K_{j,H}(\vv x-\vv u)\left\vert\log \frac{f_j(\vv u)}{f_j^{0}(\vv u)}\right\vert\,d\vv u\right]\,d\vv x $ is finite which, in turn, implies that the Fubini theorem can be used. Indeed, let us additionally assume that all kernels $K_{j}(\cdot)$ are bounded from above uniformly by some positive constant $D$ and that all of $\lambda_{j}^{0}>0$. Then, using the definition $f_{j}=G_{j}f_{j}^{0}$, the expression  
\begin{align*}
&\log f_j(\vv u)\le\log\left\{\alpha_{j}\int K_{j,H}(\vv u-\vv z)\frac{g(\vv z){\cal N}f_{j}^{0}(\vv z)}{{\cal M}_{\bm\lambda}{\cal N}{\vv f}^{0}(\vv z)}\,dz\right\}\le \log \left(\frac{D\alpha_{j}}{\min_{1\le j \le m}\lambda_{j}^{0}}\right)
\end{align*}
and, therefore, we can guarantee that the integral $\int K_{j,H}(\vv x-\vv u)\log \frac{f_j(\vv u)}{f_j^{0}(\vv u)}\,d\vv u$ is finite as long as 
\[
\max\limits_{1\le j \le M}\int |\log f_{j}^{0}(\vv u)|\,du <\infty.
\] 
Later, we make these same assumptions on $\lambda_{j}$ and on the maximum integral $\max\limits_{1\le j\le M}\int |\log f_{j}^{0}(\vv u)|\,du$ to prove Theorem $1$.
\end{remark}
\begin{remark}
It is also true that any algorithm based on the updating mechanism proposed in Lemma ~\ref{basic_l} is going to be an MM algorithm. Indeed, for this to be true, there must exist a majorizing functional $b^{0}(\vv f,\bm\lambda)$ such that, when shifted by a constant, it majorizes the ``original" functional $l(\vv f,\bm\lambda)$:
\[
b^{0}(\vv f,\bm\lambda)+C^{0}\ge l(\vv f,\bm\lambda)
\] 
with equality when $(\vv f,\bm\lambda)=(\vv f^{0},\bm\lambda^{0})$ \cite{hunter2004tutorial, wu2010mm}. The exact form of the majorizing functional depends on the current parameter values and so does the constant $C^{0}$; this is why we are using the superscript for both $b^{0}(\vv f,\bm\lambda)$ and $C^{0}$. The difference
\begin{align*}
&l(\vv f,\bm\lambda)-l(\vv f^{0},\bm\lambda^{0})\le -\int g(\vv x)\sum_{j=1}^{m}w_{j}^{0}(\vv x)\log\left(\lambda_{j}{\cal N}_{j}f_{j}(\vv x)\right)\,d\vv x\\
&- \left(-\int g(\vv x)\sum_{j=1}^{m}w_{j}^{0}(\vv x)\log\left(\lambda_{j}^{0}{\cal N}_{j}f_{j}^{0}(\vv x)\right)\,d\vv x\right)=b^{0}(\vv f,\bm\lambda)-b^{0}(\vv f^{0},\bm\lambda^{0})
\end{align*}
where the functional $b^{0}(\vv f,\bm\lambda)$ is defined as 
\[
b^{0}(\vv f,\bm\lambda)=-\int g(\vv x)\sum_{j=1}^{m}w_{j}^{0}(\vv x)\log\left(\lambda_{j}{\cal N}_{j}f_{j}(\vv x)\right)\,d\vv x.
\]
With this definition in mind, we have immediately that  
\[
l(\vv f,\bm\lambda)\le b^{0}(\vv f,\bm\lambda)+C^{0}
\]
where $C^{0}=l(\vv f^{0},\bm\lambda^{0})-b^{0}(\vv f^{0},\bm\lambda^{0})$. Thus, we can confidently claim at this point that the algorithm that we proposed is, indeed, an MM algorithm
\end{remark}
The above result suggests the following algorithm that can be used to estimate parameters $\vv f$ and $\bm\lambda$ of the density mixture \eqref{model}. 
\begin{enumerate}
\item Initialize the algorithm with a choice of $\vv f^0,\bm\lambda^0$
\item \label{init}For $t=0,1,\ldots$ define weights
\begin{equation}\label{weight}
w_{j}^{(t)}(\vv x)=\frac{\lambda_{j}^{(t)}{\cal N}_{j}f_{j}^{(t)}(\vv x)}{{\cal M}_{\bm\lambda^{(t)}}{\cal N}\vv f^{(t)}(\vv x)}
\end{equation}
where $j=1,\ldots,m$
\item Define a new set of probabilities
\begin{equation}\label{lambda}
\lambda_{j}^{(t+1)}=\int g(\vv x)w_{j}^{(t)}(\vv x)\,d\vv x,
\end{equation}
$j=1,\ldots,m$
\item Define a new set of $d$-dimensional density functions
\begin{equation}\label{function}
f_{j}^{(t+1)}(\vv x)=\alpha_{j}^{(t+1)}\int K_{j,H}(\vv x-\vv u)\frac{g(\vv u){\cal N}_{j}f_j^{(t)}(\vv u)}{{\cal M}_{\bm\lambda^{(t+1)}}{\cal N}\vv f^{(t)}(\vv u)}\,d\vv u
\end{equation}
for $j=1,\ldots,m$ where $\alpha_j^{(t+1)}$ is a normalizing constant that depends on the step of iteration $t$. 
\item Go back to step \eqref{init}.
\end{enumerate}
\begin{remark}
Note that the objective function of this algorithm converges. Indeed, we already showed that it is monotonically decreasing with respect to the negative smoothed log-likelihood defined in \eqref{sm_lik}. It only remains to notice that this negative-smoothed log-likelihood function is bounded from below by zero. Indeed, due to the \eqref{KL_rep}, it can be represented as a sum of the KL distance between the target density $g(\vv x)$ and the convex combination of the smoothed component densities, on one hand, and the difference 
\[
1-\sum_{j=1}^{m}\lambda_j\int {\cal N}_{j}f_j(\vv x)\,d\vv x
\] on the other hand. The former is always non-negative. The latter is also non-negative since ${\cal N}_{j}f_{j}(\vv x)\le {\cal S}f_{j}(\vv x)$ due to Jensen's inequality
\end{remark}

\subsection{Extended algorithm for estimation of copula parameters}

In the previous section we obtained an MM algorithm capable of estimating both probability weights and the density functions of the general nonparametric density mixture model \eqref{model}. In this version, the proposed algorithm does not attempt to characterize the dependence mechanism between marginals of each density function $f_{j}(\vv x)$, $j=1,\ldots,m$. Let us now assume that every multivariate density function $f_{j}(\vv x)$ is represented as in \eqref{cmpt} with the assumption that the parametric form of each copula $C_{j}$ is known and only the value of the parameter $\rho_{j}$ is unknown. In the current subsection, we show that, in this case, the algorithm we have just derived can be modified to estimate the parameter $\rho_{j}$ in addition to the parameters discussed earlier. 

Let $\vv x\in \bR^{d}$ be an arbitrary vector with $x$ being the value of its $k$th coordinate. Then, the $k$th marginal density function of the $j$th component at the $t$th step of iteration, $f_{jk}^{(t)}(x)$, can be obtained from the (current) estimate of the joint density function $f^{(t)}_{j}(\vv x)$ as 
\begin{equation}\label{marg_density}
f_{jk}^{(t)}(x)=\int_{-\infty}^{\infty}\cdots \int_{-\infty}^{\infty}f_{j}^{(t)}(\vv x)dx_{1}\ldots,dx_{k-1}dx_{k+1}\ldots dx_{d}.
\end{equation}
Ordinarily, if the data are known, one can estimate marginal density functions directly from the data - no knowledge of the dependence structure between variables is needed. This is not the case here, however, since we don't have the advanced knowledge of which observations are assigned to which component of the model and, therefore, we cannot estimate marginal densities $f_{jk}(x)$ at first. As we proceed, obtaining more precise estimates of joint component density functions $f_{j}(\vv x)$ enables us to start estimating marginal density functions as well. 

Next, the knowledge of marginal density functions implies that we can also obtain the marginal cumulative density function $F_{jk}^{(t)}(x)=\int_{-\infty}^{x}f_{jk}^{(t)}(s)\,ds$. Let $\vv z=(z_{1},\ldots,z_{d})^{'}$ where $z_{k}=F_{jk}^{(t)}(x_k)$, $k=1,\ldots,d$. Then, using the representation \eqref{cmpt}, we can estimate the $j$th copula density function at $\vv z$ at the $t$th step of iteration as
\begin{equation}\label{copula_density}
c_{j}^{(t)}(\vv z):=\frac{f_j^{(t)}(\vv x)}{\prod_{k=1}^{d}f_{jk}^{(t)}(x_k)}.
\end{equation}

Finally, let $c(\vv z; \rho_{j})$ be the true $j$th copula density function, viewed as a function of $\rho_j$ at the point $\vv z$. If there are $n$ observations $\vv x_1,\ldots,\vv x_n$ from the model \eqref{model}, we can define  
\[
\rho_{j}^{(t)}=\arg\min S_{n}^{(t)}(\rho_{j})
\]
where 
\[
S_{n}^{(t)}(\rho_{j}) =\frac{1}{n}\sum_{i=1}^{n}[c(\vv z_i;\rho_{j})-c_{j}^{(t)}(\vv z_i)]^{2}.
\]

This discussion suggests the following extended algorithm. We denote $\bm\rho=(\rho_{1},\ldots,\rho_{m})^{'}$.  
\begin{enumerate}
\item Initialize the algorithm with a choice of $\vv f^0,\bm\lambda^0,\bm\rho^{0}$
\item \label{init_upd}For $t=0,1,\ldots$ define weights
\begin{equation}\label{weight_upd}
w_{j}^{(t)}(\vv x)=\frac{\lambda_{j}^{(t)}{\cal N}f_{j}^{(t)}(\vv x)}{{\cal M}_{\bm\lambda^{(t)}}{\cal N}\vv f^{(t)}(\vv x)}
\end{equation}
where $j=1,\ldots,m$
\item Find individual marginal densities $f_{jk}^{(t)}(x_{ik})$ for all of $k=1,\ldots,d$ using marginal integration as in \eqref{marg_density}
\item Obtain individual marginal cumulative densities $F_{jk}^{(t)}(x_{ik})=\int_{-\infty}^{x_{ik}}f_{jk}^{(t)}(s)\,ds$ 
\item Define a new set of probabilities
\begin{equation}\label{lambda_upd}
\lambda_{j}^{(t+1)}=\int g(\vv x)w_{j}^{(t)}(\vv x)\,d\vv x,
\end{equation}
$j=1,\ldots,m$
\item Define a new set of $d$-dimensional density functions at the set of points 
\begin{equation}\label{function_upd}
f_{j}^{(t+1)}(\vv x)=\alpha_{j}^{(t+1)}\int K_{j,H}(\vv x_i-\vv u)\frac{g(\vv u){\cal N}f_j^{(t)}(\vv u)}{{\cal M}_{\bm\lambda^{(t+1)}}{\cal N}\vv f^{(t)}(\vv u)}\,d\vv u
\end{equation}
for $j=1,\ldots,m$ where $\alpha_j^{(t+1)}$ is a normalizing constant that depends on the step of iteration $t$. 
\item  Let $\vv z_{i}=(z_{i1},\ldots,z_{id})^{'}$ where $z_{ik}=F_{jk}^{(t)}(x_{ik})$. Estimate the copula density of the $j$th component at the point $\vv z_i$ at the step of iteration $t$ as  
\[
c_{j}^{(t+1)}\left(\vv z_{i}\right):=\frac{f_j^{(t+1)}(\vv x_i)}{\prod_{k=1}^{d}f_{jk}^{(t+1)}(x_{ik})}. 
\] 
\item Let $c(\vv z_{i};\rho_{j})$ be the value of the true $j$th copula density function at the point $\vv z_i$. We now estimate the value of the copula parameter $\rho_j$ as the solution of an optimization problem
\[
\rho^{(t+1)}_{j}=\arg\min S_{n}^{(t+1)}(\rho_{j})
\]
where $S_{n}^{(t+1)}(\rho_{j}) =\frac{1}{n}\sum_{i=1}^{n}[c(\vv z_i;\rho_j)-c_{j}^{(t+1)}(\vv z_i)]^{2}$.
\item Go back to step \eqref{init_upd}.
\end{enumerate}
As opposed to the general algorithm setting where the kernel function could be an essentially arbitrary multivariate rescaled density function, here we use the value of the $j$th component density function from the first iteration $f_{j}^{(0)}$ as a kernel. This kernel undergoes what amounts to a reweighting procedure at every step of iteration in \eqref{function_upd} to become the next iteration $f_{j}^{(1)}$, $f_{j}^{(2)}$ etc. In other words, we define
\begin{equation}\label{cp-kernel}
K_{j}(\vv x)=c(F_{j1}^{(0)}(x_1),\ldots,F_{jd}^{(0)}(x_d);\rho_{j}^{0})\prod_{k=1}^{d}f_{jk}^{(0)}(x_k).
\end{equation}
We are choosing this specific kernel in order to guarantee that the next iteration preserves the specific parametric copula structure defined by the known copula density function $c(\cdot,\ldots,\cdot)$ that is assumed to be known in advance.  Note that, although available at the step of iteration $t$, the updated copula parameter value $\rho_{j}^{(t)}$ and the updated values of marginal density functions are not used in the definition of the kernel function and the initialized values are used instead. The reason is the same reason that makes the algorithm  of \cite{Biometrika_2011LevineHunterChauveau} and algorithms presented in \cite{chauveau2015semi} use a fixed bandwidth at each step of iteration - preservation of monotonicity. In principle, it does seem reasonable to update the bandwidth matrix $H$, the copula parameter $\rho_{j}$, all of the estimated marginal density functions and the estimated cumulative distribution functions in the proposed kernel function at each step of iteration since this amounts to using the new information available about estimated component densities. This, however, makes the objective functional \eqref{sm_lik} ill-defined and, therefore, invalidates the desirable monotonicity.

\subsection{Practical implementation}\label{pr_alg}
The extended algorithm proposed above is a ``conceptual" algorithm that assumes the knowledge of the target density function $g(\vv x)$. Of course, this is not the case in reality and so we propose a discretized version of this algorithm. Since we have a sample of observations $\vv x_1,\ldots,\vv x_n$ that is assumed to have been generated from $g(\vv x)$,  we substitute the integral $\int g(\vv x)w_{j}^{(t)}(\vv x)\,d\vv x$ with $\frac{1}{n}\sum_{i=1}^{n}w_{j}^{(t)}(\vv x_i)$. To make our exposition easier to follow, we will give a detailed form of our algorithm in its discretized version. To prevent the label switching, we are working under the constraint $\lambda_{1}\le \lambda_{2}\le \cdots\le \lambda_{m}$.

\begin{enumerate}
\item Initialize the algorithm with a choice of $\vv f^0,\bm\lambda^0$, $\bm\rho^{0}$
\item For $t=0,1,\ldots$ find individual marginal densities $f_{jk}^{(t)}(x)$ for all of $k=1,\ldots,d$ using a discretized integral from \eqref{marg_density}
\item \label{init_upd_disc} Obtain individual marginal cumulative density functions $F_{jk}^{(t)}(x)$ by discretizing $\int_{-\infty}^{x}f_{jk}^{(t)}(s)\,ds$ 
\item For $t=0,1,\ldots$ define weights
\begin{equation}\label{weight_upd_disc}
w_{j}^{(t)}(\vv x_i)=\frac{\lambda_{j}^{(t)}{\cal N}_{j}f_{j}^{(t)}(\vv x)}{{\cal M}_{\bm\lambda^{(t)}}{\cal N}\vv f^{(t)}(\vv x)}
\end{equation}
where $j=1,\ldots,m$
\item Define a new set of probabilities
\begin{equation}\label{lambda_upd_disc}
\lambda_{j}^{(t+1)}=\frac{1}{n}\sum_{i=1}^{n}w_{j}^{(t)}(\vv x_i)
\end{equation}
$j=1,\ldots,m$
\item Define a new set of $d$-dimensional density functions estimated at $\vv x_l$, $l=1,\ldots,n$ 
\begin{equation}\label{function_upd_disc}
f_{j}^{(t+1)}(\vv x_l)=\frac{\sum_{i=1}^{n}K_{j,H}(\vv x_{l}-\vv x_i)\frac{{\cal N}_{j}f_j^{(t)}(\vv x_i)}{{\cal M}_{\bm\lambda^{(t+1)}}{\cal N}\vv f^{(t)}(\vv x_i)}}{\sum_{i=1}^{n}\frac{{\cal N}_{j}f_j^{(t)}(\vv x_i)}{{\cal M}_{\bm\lambda^{(t+1)}}{\cal N}\vv f^{(t)}(\vv x_i)}}
\end{equation}
for $j=1,\ldots,m$ 
where the kernel $K_{j}(\cdot)$ is selected as in \eqref{cp-kernel}
\item \item Obtain individual marginal densities $f_{jk}^{(t+1)}(x)$ for all of $k=1,\ldots,d$ using a discretized integral from \eqref{marg_density}
\item\label{est_copula}  Let $\vv z_{i}=(z_{i1},\ldots,z_{id})^{'}$ where $z_{ik}=F_{jk}^{(t)}(x_{ik})$, $k=1,\ldots,d$. Estimate the copula density of the $j$th component at the point $\vv z_i$ at the step of iteration $t$ as  
\[
c^{(t+1)}(\vv z_{l}):=\frac{f_j^{(t+1)}(\vv z_l)}{\prod_{k=1}^{d}f_{jk}^{(t+1)}(x_{lk})}. 
\] 
\item\label{copula_opt} Let $c(\vv y_{l};\rho_{j})$ be the value of the true $j$th copula density function at the point $\vv y_l$ if $\rho^{(j)}$ is the true value of the copula parameter. We now estimate the value of the copula parameter $\rho_j$ at the $t+1$st step of iteration as the solution of an optimization problem
\[
\rho^{(t+1)}_{j}=\arg\min S_{n}^{(t+1)}(\rho_{j})
\]
where $S_{n}^{(t+1)}(\rho_{j}) =\frac{1}{n}\sum_{l=1}^{n}[c(\vv y_l;\rho_j)-c^{(t+1)}(\vv y_l)]^{2}$.
\item Go back to step \eqref{init_upd_disc}.
\end{enumerate}

At every step of iteration, it is also necessary to choose a numerical integration approach to obtain updated marginal densities $f_{jk}^{(t)}(x)$. Also, the algorithm requires performing a multivariate convolution every time an operator ${\cal N}$ is applied. Note that there are few flexible copula families available for higher dimensions. Some possibilities include using e.g. vine copulas \cite{aas_czado} or factor copulas \cite{kr_joe}. However, attempting to use them results in a very difficult optimization problem at every iteration step whenever the new value of the copula parameter has to be obtained. Due to this, we only consider the low-dimensional case in this manuscript. In a low-dimensional case, one reasonable approach to numerical integration is to use e.g. adaptive Gaussian quadrature to integrate out all the unneeded coordinates in $f_{j}^{(t)}(\vv x)$. In particular, we use a modern R package {\it cubature} \cite{narasimhan2025package} that uses an adaptive multidimensional Gaussian quadrature and works with vector-valued integrands over hypercubes. This approach is used both to obtain estimated marginal densities $f_{jk}^{(t)}(x)$ and to compute numerical convolutions. Also, to simplify the exposition, we only consider copulas parameterized by a one-dimensional parameter. Using such copulas is typically enough whenever the dependence structure of the data can be adequately represented with a single measure of association such as, for example, Kendall's tau or Spearman's rho (see e.g. \cite{nelsen2007introduction} Section $4$ dedicated to Archimedean copulas). In the univariate  copula parameter $\rho_{j}$ case, the minimization of the criterion $S_{n}^{(t+1)}$ can be performed using a basic method which is a combination of the golden section search and successive parabolic interpolation that is encoded in the R function {\it optimize}. The algorithm stops when the relative difference of successive absolute values of the discretized objective functional $-\sum_{i=1}^{n}({\cal M}_{\bm\lambda^{(t)}}{\cal N}f^{(t)})(x_i)$ become smaller than the prespecified tolerance value of $\eps=10^{-8}$.

\section{Convergence of functional sequences generated by the algorithm}\label{Con_func}

In addition to the convergence of the objective function, we also investigate if a sequence $\{\lambda_{j}^{(t)},f_{j}^{(t)}(\vv x)\}_{j=1}^{m}$, $t=0,1,2,3,\ldots$, generated by the algorithm, converges as well. Such convergence results have been established earlier for related algorithms in e.g. \cite{Biometrika_2011LevineHunterChauveau} and \cite{shen2018mm}. For conciseness, we will also use the notation $\left\{\bm\lambda^{(t_{l})},\vv f^{(t_{l})}\right\}$ for this sequence. The following result can be established. 
\begin{theorem}\label{Th_con}
Let there exist $L>0$ such that $|K_{j,H}(\vv x)-K_{j,H}(\vv y)|\le L||\vv x-\vv y||$ for any $\vv x,\vv y\in \Omega$ and $j=1,\ldots,m$; in other words, we assume that the kernel density functions $K_{j}(\cdot)$ are Lipschitz continuous. Moreover, we also assume that all of them are  bounded away from zero. Finally, we assume that all of the initial $\lambda_j^0>0$ and denote $\lambda=\min_{1\le j \le m}\lambda_j^0$. Then, there exists a subsequence $\left\{\bm\lambda^{(t_{l})},\vv f^{(t_{l})}\right\}$, $l=1,2,\ldots$ that converges to a local minimum of the functional \eqref{sm_lik}.
\end{theorem}
\begin{remark}
Since kernel function $K_{j}(\cdot)$ are products of marginal density functions at initialization and the copula density function, boundedness away from zero requires that all of the initial marginal density functions and a copula density function must be bounded away from zero. An example of non-trivial family of copula density functions bounded away from zero can be found in e.g. \cite{longla2024new} p. $4338-39$. The family introduced there is a perturbation of an independence copula which is a usual (trivial) example of a copula whose density is bounded away from zero. Note also that thus introduced family of bivariate copulas $C_{1,\lambda}(u,v)$ (in the notation of \cite{longla2024new}) is quite suitable for practical use since it is capable of modeling a pair of random variables with the full range of Pearson correlation from $-1$ to $1$.  The same copula family is also Lipschitz continuous on a compact set. Note that one can also choose initialized marginal density functions to be bounded away from zero. Hence, as long as the initialized marginal density functions are Lipschitz continuous on $\Omega$, the entire proposed kernel is Lipschitz continuous as well (because the product of Lipschitz continuous functions on a compact is Lipschitz continuous on that compact). 
\end{remark}
\begin{proof}
The proof of this Theorem consists of the combination of proofs of Lemmas ~\ref{sub} and  ~\ref{lw_semi} below.
\end{proof}.
\begin{lemma}\label{sub}
Under the assumptions of Theorem \eqref{Th_con} there exists a converging subsequence of the sequence $f_j^{(t)}$ for any  $j=1,\ldots,m$, $t=0,1,2,\ldots$ generated by the proposed algorithm. 
\end{lemma}
\begin{proof}
This result can be proved using Arzel\`{a}-Ascoli theorem. To do so, we have to show, first, that the sequence $\{f_j^{(t)}(\vv u)\}$ is uniformly bounded for all $t=0,1,\ldots$: there exists $F_1>0$ such that $|f_j^{(t)}(\vv u)|\le F_1$. Second, we have to show that the sequence $\{f_j^{(t)}\}$ is also uniformly continuous: there exists $F_2>0$ such that $|f_j^{(t)}(\vv x)-f_j^{(t)}(\vv y)|\le F_2||\vv x-\vv y||$ for any $\vv x,\vv y\in \Omega$ and $t=0,1,2,\ldots$. 
 
We begin with establishing a simple fact that ${\cal M}_{\lambda^{(t)}}{\cal N}f^{(t-1)}(\vv u)$ is bounded from above for any step of iteration $t$. This is the quantity that is found in the denominator of the integral that defines the next iteration of the function $f_j$, $j=1,\ldots,m$ according to \eqref{function}. First, remember that all of the kernel functions are uniformly continuous on $\Omega$. Also, all of them are also strictly positive on a compact set. Therefore, there exist positive constants $a$ and $A$ such that $a\le K_{j,H}(\cdot)\le A$. 
Indeed, using Jensen's inequality, one can easily verify that ${\cal N}_{j}f_j^{(t-1)}(\vv u)\le \int K_{j,H}(\vv x-\vv u)f_j^{(t-1)}(\vv x)\,d\vv x={\cal S}_{j}f^{(t-1)}_j(\vv u)$. Therefore, 
\begin{align}\label{up_bound}
&{\cal M}_{\lambda^{(t)}}{\cal N}f^{(t-1)}(\vv u)=\sum_{j=1}^{m}\lambda_j^{(t)}{\cal N}_{j}f_j^{(t-1)}(\vv u)\le \int K_{j,H}(\vv x-\vv u)\sum_{j=1}^{m}\lambda_j^{(t)} f_j^{(t-1)}(\vv x)\,d\vv x\\
&\le A \int \sum_{j=1}^{m}\lambda_j^{(t)} f_j^{(t-1)}(\vv x)\,d\vv x=A\nonumber
\end{align}
because $\sum_{j=1}^{m}\lambda_j^{(t)} f_j^{(t-1)}(\vv x)$ is a density and so its integral is equal to $1$. Note also that the same argument produces boundedness of ${\cal M}_{\lambda^{(t)}}{\cal N}f^{(t)}(\vv u)$, which is the quantity in the denominator of the weight $w_j^{(t)}$ defined in \eqref{weight} for any $t$ as well.

Next,we show that ${\cal N}_{j}f_j^{(t)}(\vv u)$ is also bounded away from zero uniformly for every step of iteration $t$ by mathematical induction. Due to boundedness away from zero of all initialized densities $f_{j}^{(0)}$, $j=1,\ldots,m$,  all of these densities are also log integrable. With this mind, let $B=\max_{1\le j \le m}\int|\log f_j^0(\vv u)|\,d\vv u<\infty$. First, for $t=0$ we have $\int\left\vert\log f_j^0(\vv u)\right\vert\,d\vv u\le B$ for any $j=1,\ldots,m$ by assumption. Due to this, we clearly have ${\cal N}_{j}f_j^0(\vv u)\ge e^{-AB}>0$. Next, we assume that ${\cal N}_{j}f_j^{(t-1)}(\vv u)\ge a_1>0$ for some $a_1$. Then, we conclude that, since ${\cal M}_{\lambda^{(t)}}{\cal N}f^{(t-1)}(\vv u)\le A$, we also have  immediately that, due to \eqref{function}, $f^{(t)}_j(\vv u)$ is also bounded away from zero uniformly for any $t$. But that, of course, immediately implies that ${\cal N}_{j}f_j^{(t)}(\vv u)$ is bounded away from zero. Thus, by induction, ${\cal N}_{j}f_j^{(t)}(\vv u)$ is bounded away from zero for an arbitrary step of iteration $t$. 

Next, we are going to prove that $\lambda_{j}^{(t)}$ is  bounded away from zero at any step of iteration $t$ using the mathematical induction. First, since $\lambda_j^{0}\ge \lambda$, we find immediately that $w_j^0\ge \frac{\lambda e^{-AB}}{A}>0$ for any $j=1,\ldots,m$. To use the induction principle, again assume that $\min_{1\le j \le m}\lambda_j^{(t-1)}\ge \tilde\lambda>0$. Then, since $\lambda_{j}^{(t-1)}$ is bounded away from zero by assumption, and ${\cal N}_{j}f_{j}^{(t-1)}(u)$ is also assumed to be bounded away from zero, we immediately conclude that the weight $w_j^{(t-1)}$ is bounded away from zero uniformly. Due to \eqref{lambda}, weights $\lambda_j^{(t)}$ are also uniformly bounded away from zero and the induction is complete. 

The uniform boundedness of all functions $f_j^{(t+1)}(\vv u)$ should now be clear. First, note that at any step of iteration probability weights $\lambda^{(t+1)}_{j}$ are bounded away from zero and the same is true about ${\cal N}_{j}f_{j}(\vv u)$. Thus, the denominator ${\cal M}_{\lambda^{(t+1)}}{\cal N}f^{(t)}(\vv u)$ is bounded away from zero by some positive $M$. Therefore, 
\begin{align*}
&f_j^{(t+1)}(\vv u)\le \frac{1}{M}\alpha_j^{(t+1)}\int K_{j,H}(\vv x-\vv u)g(\vv u){\cal N}_{j}f_{j}^{(t)}(\vv u)\,d\vv u\\
&\le \frac{1}{M}\alpha_{j}^{(t+1)}\int K_{j,H}(\vv x-\vv u)g(\vv u)\left\{\int K_{j,H}(\vv z-\vv u)f_{j}^{(t)}(\vv z)\,d\vv z\right\}\,d\vv u
\end{align*}
The double integral above is uniformly bounded from above because both $g(\vv u)$ and $f_j^{(t)}(\vv z)$ are densities while $K_{j,H}(\cdot)\le A$. Thus, we proved that there exists a constant $F$ such that $\left\vert f_{j}^{(t+1)}(\vv u)\right\vert \le F$. 

To show the uniform equicontinuity, we consider the difference
\begin{align*}
&|f_{j}^{(t)}(\vv x)-f^{(t)}_j(\vv y)|\le \alpha_j^{(t)}\int \frac{g(\vv u){\cal N}_{j}f_j^{(t-1)}(\vv u)}{{\cal M}_{\bm\lambda^{(t)}}{\cal N}\vv f^{(t-1)}(\vv u)}\left\vert K_{j,H}(\vv x-\vv u)-K_{j,H}(\vv y-\vv u)\right\vert\,d\vv u\\
&\le L\alpha_j^{(t)}|\vv x-\vv y|\int\frac{g(\vv u){\cal N}_{j}f_j^{(t-1)}(\vv u)}{{\cal M}_{\bm\lambda^{(t)}}{\cal N}\vv f^{(t-1)}(\vv u)}\,d\vv u.
\end{align*}
The above expression is clearly bounded from above uniformly due to the boundedness of ${\cal M}_{\bm\lambda^{(t)}}{\cal N}\vv f^{(t-1)}(\vv u)$ away from zero, and the fact that $K_{j,H}(\cdot)\le A$ on $\Omega$. Thus, there exists a constant $F_2$ such that
\[
|f_{j}^{(t)}(\vv x)-f^{(t)}_j(\vv y)|\le F_2|\vv x-\vv y|
\]
where $F_2$ does not depend on $t$. The result has been proved. 
\end{proof}
As a next step, note that every function in the sequence $f_{j}^{(t)}$ except, perhaps, the first one, can be represented as 
\[
f_{j}^{(t+1)}(\vv x)={\cal S}\phi_{j}(\vv x)
\]
where $0\le \phi_{j}(\vv x)\in L_{1}(\Omega)$ and $\int_{\Omega}\phi_{j}(\vv x)\,d\vv x=1$. In this representation, 
\[
\phi_{j}(\vv x)=\alpha_{j}^{(t+1)}\frac{g(\vv x){\cal N}_{j}f_{j}^{(t)}(\vv x)}{{\cal M}_{\bm\lambda^{(t+1)}}{\cal N}\vv f^{(t)}(\vv x)}
\]
and $\int_{\Omega}\phi_{j}(\vv x)\,d\vv x=1$ due to the definition of $\alpha_{j}^{(t+1)}$. Let us denote 
\[
B=\{{\cal S}\phi:0\le \phi_{j}\in {\cal F},\int \phi_{j}(\vv x)\,d\vv x=1\}.
\]
Then, using the same argument as in \cite{Biometrika_2011LevineHunterChauveau}, we can show immediately that ${\cal N}\vv f(\vv x)$ is well defined for any $\vv f\in B$ and, therefore, the entire functional $l(\theta)$ is well defined on $B$ as well. Now we can establish the following Lemma.
\begin{lemma}\label{lw_semi}
Under the assumptions of Theorem \eqref{Th_con} the functional \eqref{sm_lik} is lower semicontinuous on $B$. 
\end{lemma}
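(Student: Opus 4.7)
The plan is to use the Kullback--Leibler representation \eqref{KL_rep} and prove each of its two summands is lower semicontinuous on $B$. Take a convergent sequence $(\vv f^{(n)}, \bm\lambda^{(n)}) \to (\vv f, \bm\lambda)$ in $B$ under the topology inherited from Lemma \ref{sub}, namely uniform convergence on the compact set $\Omega$ of each component $f_j^{(n)}$ together with convergence of the mixing weights in the simplex. The goal is to show $l(\vv f, \bm\lambda) \le \liminf_n l(\vv f^{(n)}, \bm\lambda^{(n)})$.

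First I would establish uniform pointwise bounds and pointwise convergence of the transformed sequences. Every element of $B$ has the form ${\cal S}\phi$ with $\phi \ge 0$ and $\int \phi = 1$, so the sandwich $a \le K_H \le A$ on $\Omega$ gives $a \le f_j^{(n)} \le A$ uniformly on $\Omega$. Bounded convergence applied inside ${\cal S}\log f_j^{(n)}$ then yields ${\cal N} f_j^{(n)}(\vv x) \to {\cal N} f_j(\vv x)$ pointwise on $\Omega$, and consequently $h_n(\vv x) := ({\cal M}_{\bm\lambda^{(n)}}{\cal N}\vv f^{(n)})(\vv x)$ converges pointwise to $h(\vv x) := ({\cal M}_{\bm\lambda}{\cal N}\vv f)(\vv x)$. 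Moreover each ${\cal N} f_j^{(n)}$ inherits positive upper and lower bounds from $f_j^{(n)}$, so $\{h_n\}$ is itself uniformly bounded above and below by positive constants on $\Omega$.

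Next I would treat the two summands of \eqref{KL_rep} separately. For the second summand, dominated convergence on the compact set $\Omega$ gives
\[
\sum_{j=1}^{m} \lambda_j^{(n)} \int {\cal N} f_j^{(n)}(\vv x)\, d\vv x \longrightarrow \sum_{j=1}^{m} \lambda_j \int {\cal N} f_j(\vv x)\, d\vv x,
\]
so this term is in fact continuous. For the first summand I would invoke Fatou's lemma tailored to the generalized relative entropy: the integrand $g(\vv x)\log\bigl(g(\vv x)/h_n(\vv x)\bigr) + h_n(\vv x) - g(\vv x)$ is pointwise non-negative by the elementary inequality $a\log(a/b) \ge a - b$ valid for $a,b > 0$, and it converges pointwise to $g\log(g/h) + h - g$ because $h_n \to h$ pointwise and all quantities are bounded away from zero on $\Omega$. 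Fatou therefore delivers $D(g\,|\,h) \le \liminf_n D(g\,|\,h_n)$. Combining this with the continuity of the second summand yields the required lower semicontinuity of $l$.

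The main obstacle is conceptual rather than computational: one has to identify the right topology on $B$ and verify that the uniform lower bound $f_j^{(n)} \ge a > 0$ on $\Omega$ really is preserved along the sequence, because without it $\log f_j^{(n)}$ could diverge and the nonlinear smoothing operator ${\cal N}$ would fail to commute with the limit. This preservation is precisely where the defining structure of $B$ as a set of kernel-smoothed densities pays off; once it is secured, the rest of the proof is a routine alternation of dominated convergence (for the linear term) and Fatou's lemma (for the entropy term).
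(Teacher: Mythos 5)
Your proof is correct and follows essentially the same route as the paper: both arguments reduce to applying Fatou's lemma to a nonnegative ``generalized entropy'' integrand obtained by adding a linear correction to $-g\log h_n$ (you phrase it via the Kullback--Leibler representation \eqref{KL_rep}, the paper via $\rho(t)=t-\log t-1$), after securing uniform positive upper and lower bounds on the smoothed densities so that the nonlinear operator $\mathcal{N}$ passes to the limit. The only cosmetic differences are that you work with an arbitrary uniformly convergent sequence in $B$ while the paper restricts to the subsequences extracted in Lemma \ref{sub}, and that you make the continuity of the linear term $\sum_j\lambda_j\int\mathcal{N}f_j$ explicit.
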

\begin{proof}
First, note that by definition, $\lambda_{j}^{(t)}\le 1$ for any step of iteration $t$, $j=1,\ldots,m$. In the proof of Lemma \eqref{sub}, we showed that $\lambda_{j}^{(t)}$ is also uniformly bounded away from zero. By Bolzano-Weierstrass theorem, $\{\lambda_{j}^{(t)}\}$ must, then, have a convergent subsequence - e.g. a subsequence $\lambda_{j}^{(t_{l})}$ that converges to the (finite) $\liminf\lambda_{j}^{(t)}$. Let us denote this lower limit $\xi_{j}$ for the sequence $\lambda_{j}^{(t)}$ and $\bm\xi=(\xi_1,\ldots,\xi_{m})^{'}$. Now, consider a subsequence $f_{j}^{(t_{l})}$, $l=1,2,\ldots$ that converges to $\liminf f_{j}^{(t)}$. Note that the result of Lemma \eqref{sub} can be applied not only to the entire sequence $f_{j}^{(t)}$ but also to the subsequence $f_{j}^{(t_{l})}$; therefore, one can conclude that this subsequence has a uniformly converging subsubsequence that converges uniformly to the same limit as the subsequence $f_{j}^{(t_{l})}$ as $l\rightarrow \infty$. Thus, we can assume, without loss of generality, that the entire subsequence $f_{j}^{(t_{l})}\rightarrow \liminf f_{j}^{(t)}\equiv \psi_j$ uniformly as $l\rightarrow \infty$.  Since every element of the subsequence is bounded away from zero, we have $\log f_{j}^{(t_{l})}\rightarrow \log \psi_{j}$. Considering a joint sequence $(\lambda_{j}^{(t_{l})}, f_{j}^{(t_{l})})$ for all $j=1,\ldots,m$ and following the argument of \cite{Biometrika_2011LevineHunterChauveau} we quickly conclude that ${\cal M}_{\vv\lambda^{(t_{l}+1)}}{\cal N}\vv f^{(t_{l})}\rightarrow {\cal M}_{\xi}{\cal N}\bm\psi$. By an elementary inequality, $\rho(t)=t-\log t-1\ge 0$. Moreover, since ${\cal M}_{\xi}{\cal N}\bm\psi(\vv x)\ge 0$, by Fatou lemma it is always true that 
\[
\int g(\vv x){\cal M}_{\bm\xi}{\cal N}\bm\psi(\vv x)\le \liminf \int g(\vv x){\cal M}_{\bm\lambda^{(t_{l}+1)}}{\cal N}\vv f^{(t_{l})}(\vv x).
\]
Due to the above, the inequality 
\[
\int g(\vv x)\rho({\cal M}_{\bm\xi}{\cal N}\bm\psi(\vv x))\,d\vv x\le \liminf \int g(\vv x)\rho({\cal M}_{\bm\lambda^{(t_{l}+1)}}{\cal N}\vv f^{(t_{l})}(\vv x))\,d\vv x.
\]
is equivalent to 
\[
-\int g(\vv x)\log \{{\cal M}_{\bm\xi}{\cal N}\psi(\vv x)\}\,d\vv x \le \liminf \int g(\vv x)\log \{
{\cal M}_{\bm\lambda^{(t_{l}+1)}}{\cal N}\vv f^{(t_{l})}(\vv x)\}\,d\vv x
\]
which implies that the functional \eqref{sm_lik} is indeed a lower semicontinuous one.  
\end{proof}

One can also prove the result of Theorem \eqref{Th_con} in the case where the unknown density domain $\Omega$ is not compact as long as the smoothing kernel is defined on a compact. To show that this is the case, we begin with establishing an alternative version of Lemma $1$.  In order to prove this alternative version, we must use Fr\'{e}chet-Kolmogorov theorem instead of the Arzel\`{a}-Ascoli theorem. The Fr\'{e}chet-Kolmogorov theorem establishes necessary and sufficient conditions of relative compactness for a functional space $L^{p}(\bR^{n})$; in our case, $p=1$. Since it is a little less well known than Arzel\`{a}-Ascoli theorem, and also for convenience, we state this theorem in full. 
\begin{theorem}\label{Fr-Kolm}
Let ${\cal B}$ be a set in $L^p(\bR^{n})$ with $p\in [1,\infty)$. The subset ${\cal B}$ is relatively compact if and only if the following properties hold for any function $f\in {\cal B}$: 
\begin{enumerate}
\item $\lim_{r\rightarrow \infty}\int_{||x||>r}|f|^p=0$ uniformly on ${\cal B}$,
\item $\lim_{a\rightarrow 0}||\tau_af-f||_p=0$ uniformly on ${\cal B}$ 
\end{enumerate}
where $||\cdot||_{p}$ stands for the $L^{p}(\bR^{n})$ norm and where $\tau_af$ denotes the translation of $f$ by $a$, that is, $\tau_af(x)=f(x-a)$.
\end{theorem} 
A very nice proof of this result can be found in e.g. an expository paper \cite{hanche2010kolmogorov}. Now we can start with establishing the new version of Lemma \eqref{sub}.

\begin{lemma}\label{sub_noncom}
Let the assumptions of Theorem \eqref{Th_con} be true. We also assume that the kernel $K$ has a closed rectangular support $\Delta$ on which it is bounded away from zero. Then, there exists a converging subsequence of the sequence $f_j^{(t)}$ for any  $j=1,\ldots,m$, $t=0,1,2,\ldots$ generated by the proposed algorithm. 
\end{lemma}
\begin{remark}
Note that here we also have the freedom to define initialized density functions $f_{j}^{(0)}$ on a (possibly large) $d$-dimensional compact set and to make them strictly positive and Lipschitz continuous. In that case, the kernel function \eqref{cp-kernel} is going to have a compact support and be strictly positive  as long as the copula density function is a strictly positive one such as e.g. the one corresponding to $C_{1,\lambda}$ of \cite{longla2024new}. Hence, the assumptions of Lemma \eqref{sub_noncom} are going to be valid.
\end{remark}
\begin{proof}

Let us now check the first condition in the Theorem \eqref{Fr-Kolm} for functions that belong to the set $B$. As a preliminary step, we  show first that at any step of iteration $\lambda_{j}^{(t)}$ is bounded away from zero. Indeed, from our algorithm we can see that $\lambda_{j}^{(t)}=\int g(\vv x)w_{j}^{(t-1)}(\vv x)\,d\vv x$; thus, if we show that the weight $w_{j}^{(t-1)}(\vv x)$ is always bounded away from zero for any $t$, the probability $\lambda_{j}^{(t)}$ is bounded away from zero as well. Since the kernel function $K$ is bounded from below, we can easily claim that for any $f_{j}^{(t)}\in B$ $f_{j}^{(t)}(\vv x)=\int K_{j,H}(\vv x-\vv u)\phi_{j}(\vv u)\,d\vv u\ge \inf_{\vv x\in \Omega}K_{j,H}(\vv x-\vv u)\int \phi(\vv u)\,d\vv u=K_{*}>0$. Next, by definition of the smoothing operator ${\cal N}_{j}f_{j}^{(t)}(\vv x)$, and since $f_{j}^{(t)}\in B$ for any step of iteration $t$,  we have ${\cal N}_{j}f_{j}^{(t)}(\vv x)=\exp\left\{\int K_{j,H}(\vv x-\vv u)\log f_{j}^{(t)}(\vv u)\,d\vv u\right\}\ge \exp\{\log K^{*}\int K_{j,H}(\vv x-\vv u)\,d\vv u\}=K^{*}>0$ since the kernel function is a proper density function. Now, recall that at each step $t$ the weight is 
\begin{equation}
w_{j}^{(t-1)}(\vv x)=\frac{\lambda_{j}^{(t-1)}{\cal N}_{j}f_{j}^{(t-1)}(\vv x)}{{\cal M}_{\bm\lambda^{(t-1)}}{\cal N}\vv f^{(t-1)}(\vv x)}.
\end{equation}
Since the kernel $K$ is continuous on a compact rectangle, it is also necessarily bounded from above by a positive constant $M$. Using Jensen's inequality, we immediately find that ${\cal N}_{j}f_{j}^{(t-1)}(\vv x)$ is always bounded from above by $M$ as well. 

Analyzing the constant $\alpha_{j}^{(t)}$, we see that
\begin{align*}
&\alpha_{j}^{(t)}=\frac{1}{\int d\vv x\int K_{j,H}(\vv x-\vv u)\frac{g(\vv u){\cal N}_{j}f_{j}^{(t)}(\vv u)}{{\cal M}_{\bm\lambda^{(t+1)}}{\cal N}\vv f^{(t)}(\vv u)}\,d\vv u}\le \frac{M}{K^{*}}\frac{1}{\int d\vv x\int K_{j,H}(\vv x-\vv u)g(\vv u)\,d\vv u}.
\end{align*}
The convolution $\int K_{j,H}(\vv x-\vv u)g(\vv u)\,d\vv u$ is always integrable as long as $K_{j,H}(\cdot)$ and the density function $g(\vv u)$ are integrable themselves since 
\[
\int d\vv x\int K_{j,H}(\vv x-\vv u)g(\vv u)\,d\vv u\le \int K_{j,H}(\vv x)\,d\vv x\int g(\vv u)\,d\vv u=1
\]
due to Young's convolution inequality. Thus, we can claim that $\alpha_{j}^{(t)}$ is bounded by a finite constant that does not depend on $t$. Let us denote this constant ${\cal C}$.

This implies that 
\begin{align*}
&\int_{||\vv x||>r}|f_{j}^{(t)}(\vv x)\,dx\le {\cal C}\int_{||\vv x||>r}\int K_{j,H}(\vv x-\vv u)\frac{g(\vv u){\cal N}_{j}f_{j}^{(t)}(\vv u)}{{\cal M}_{\bm\lambda^{(t+1)}}{\cal N}\vv f^{(t)}(\vv u)}\,d\vv u\\
&\le {\cal C}\frac{M}{K^{*}}\int_{||\vv x||>r} \int K_{j,H}(\vv x-\vv u)g(\vv u)\,d\vv u.
\end{align*}
The last integral can be made smaller then an arbitrarily chosen $\eps$ with the appropriate choice of $r$ independently of the index $t$ since, due to Young's convolution inequality,
\[
\int d\vv x\int K_{j,H}(\vv x-\vv u)g(\vv u)\,d\vv u
\]
 is finite. 

To verify the second condition we note first that, due to Lipschitz continuity of the kernel function and the fact that it is defined on a finite closed rectangle $\Delta$, we have 
\begin{align*}
&\int\left\vert f_{j}^{(t)}(\vv x-\vv a)-f_{j}^{(t)}(\vv x)\right\vert\,d\vv x\\
&\leq\int\,d\vv x\int\left\vert(K_{j,H}(\vv x-\vv a-\vv u)-K_{j,H}(\vv x-\vv u)\right\vert\phi(\vv u))\,d\vv u\\
&=\int\phi(\vv u)\,d\vv u\int\left\vert(K_{j,H}(\vv x-\vv a-\vv u)-K_{j,H}(\vv x-\vv u)\right\vert\,d\vv x\\
&\leq {\cal D}||a||
\end{align*}
for some positive constant ${\cal D}$. Hence, for any $\vv a$ such that $||\vv a||<\rho$ we have the integral bounded from above by an expression that does not depend on the function $f_{j}^{(t)}$.  
\end{proof}
Note that the proof of Lemma \eqref{sub_noncom} shows that all of $\lambda_{j}^{(t)}$ are bounded 
uniformly away from zero. Analyzing the proof of Lemma \eqref{lw_semi}, we see that this is enough to guarantee its validity in the case of density functions defined on a non-compact interval. Thus, the combination of Lemmas \eqref{sub_noncom} and \eqref{lw_semi} proves Theorem \eqref{Th_con} once again.

\section{Simulation study}\label{sim_section}
Our simulation study uses the setting of \cite{levine2024smoothed}. In order to make our exposition self-contained, we describe this setting again in brief. In particular, we use $100$ replications of four independent artificial datasets of sizes $n=300$, $500$, $700$, and $900$ that have been generated from the mixture model \eqref{model} with $m=3$ clusters of equal proportions. The densities involved are bivariate with the dependence mechanism described by a Farlie-Gumbel-Morgenstern (FGM) copula \cite{nelsen2007introduction} pp. $77-84$ with parameters $-0.5$, $0.5$, and $0$, respectively. The marginal densities are normal and Laplace distributions with mean $\mu$ and standard deviation $\sigma$. Specific choices for each of the clusters are given in the Table ~\ref{tab:marginals}. 
\begin{table}[h!t]
  \centering
  \begin{tabular}{l|ccc}
    &cluster 1           &cluster 2           &cluster 3\\ \hline
    dim 1&$\mathrm{N}(-3,2^2)$&$\mathrm{N}(0,0.7^2)$&$\mathrm{N}(3,1.4^2)$\\
    dim
    2&$\mathrm{L}(0,0.7^2)$&$\mathrm{L}(3,1.4^2)$&$\mathrm{L}(0,2.8^2)$\\
  \end{tabular}
  \caption{Marginals used for the numerical experiment of
    Section~\ref{sim_section}.} 
  \label{tab:marginals}
\end{table}
To initialize the algorithm, the data are partitioned into $m$ groups using the $k$-means algorithm. The initial weights are set to be equal to proportions of observations belonging to each group. The starting densities $f_{j}^{0}$, $j=1,\ldots,m$ are bivariate Nadaraya-Watson kernel estimates obtained using points assigned to a particular group by the $k$-means algorithm.  If necessary, they are ``clipped" to the height no less than $10^{-3}$ away from zero and renormalized. The starting values of copula parameters $\rho_{j}^{(0)}$, $j=1,\ldots,m$ are obtained using the starting densities $f_{j}^{(0)}$ and the corresponding starting marginal densities $f_{jk}^{(0)}$, $k=1,\ldots,d$ using the same process as in steps \eqref{est_copula} and \eqref{copula_opt} of the algorithm in the Section \eqref{pr_alg}. The bandwidth matrix is selected using the two-stage plug-in selector of \cite{duong2003plug}. This selector is based on the idea of selecting the first stage pilot bandwidth as a minimizer of the SAMSE (Sum of Asymptotic Mean Squared Errors) criterion. The suggested selector is implemented using the $R$ package $ks$ \cite{duong2007ks}. The bandwidth matrices are kept fixed after initialization to preserve monotonicity of the algorithm. 

Note that, in order to prove the convergence of functional sequence $\left\{\bm\lambda^{(t_{l})},\vv f^{(t_{l})}\right\}$ in the previous section, we had to impose some additional assumptions on the copula density function, such as boundedness away from zero and the Lipschitz continuity. It has been our observation in practice that the proposed algorithm seems to perform quite satisfactorily even if not all of these assumptions are satisfied. For example, the FGM copula density used in our simulation setting is only uniformly bounded away from zero when the FGM parameter is equal to zero (in which case it simply corresponds to the product copula); nevertheless, our results don't seem to suffer too much from this violation of assumptions of Section \eqref{Con_func}.
\begin{figure}[h!b]
  \centering
  \includegraphics[width=0.7\textwidth]{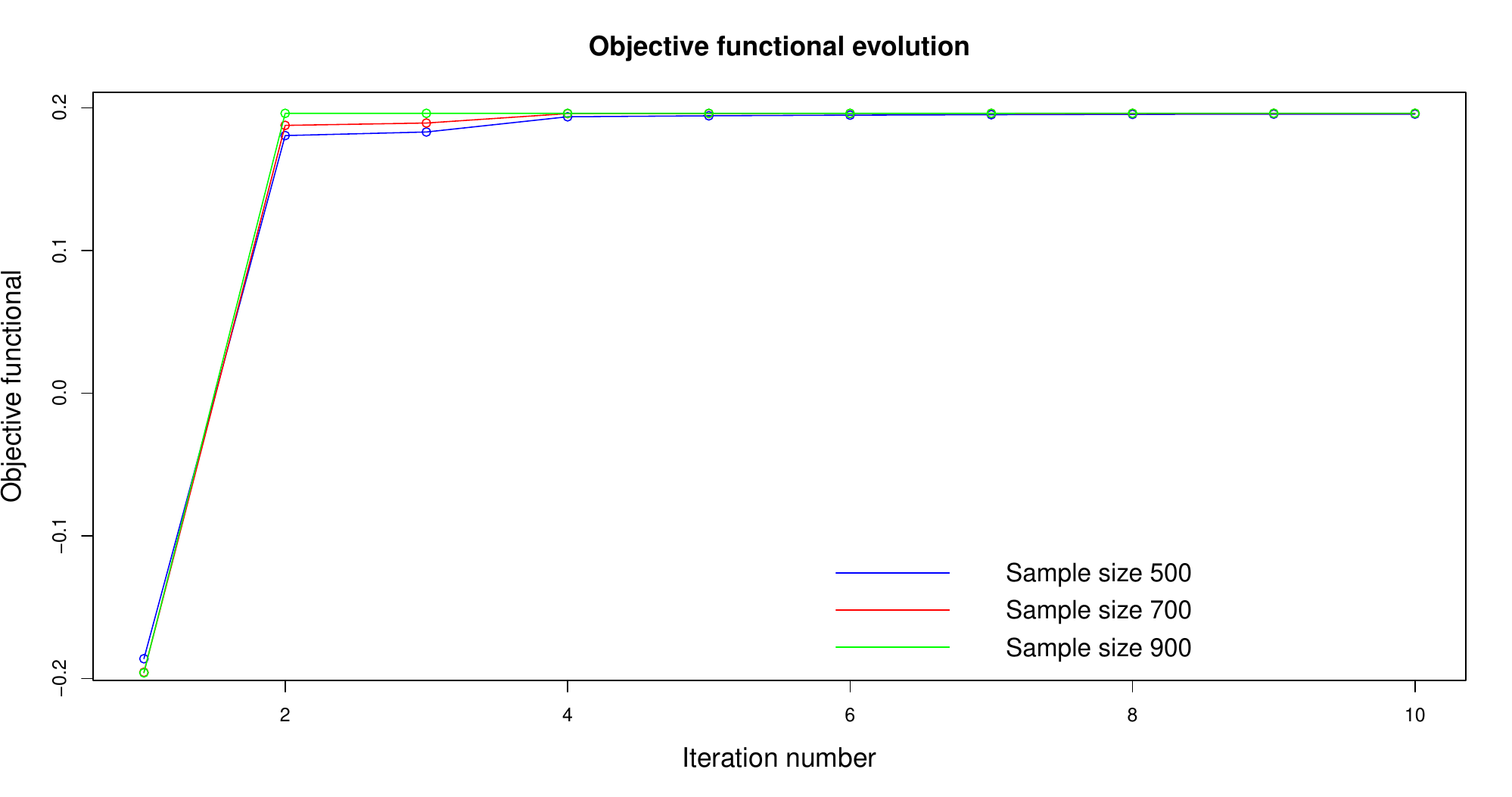}
  \caption{A realization of the evolution process of the sample version of the objective functional over the course of $10$ iterations for three different sample sizes: $500$, $700$, and $900$.}
  \label{Obj.functional}
\end{figure}

The Figure ~\ref{Obj.functional} shows the behavior of the sample version of the objective functional as the number of iterations grow for three different sample sizes $500,700$ and $900$.  Note that the trajectory stabilizes very quickly: after the third or fourth iteration at most, the change in the value of the objective functional is almost imperceptible, even for the sample size $500$.

\begin{figure}[h!t]
  \centering
  \includegraphics[width=0.55\textwidth]{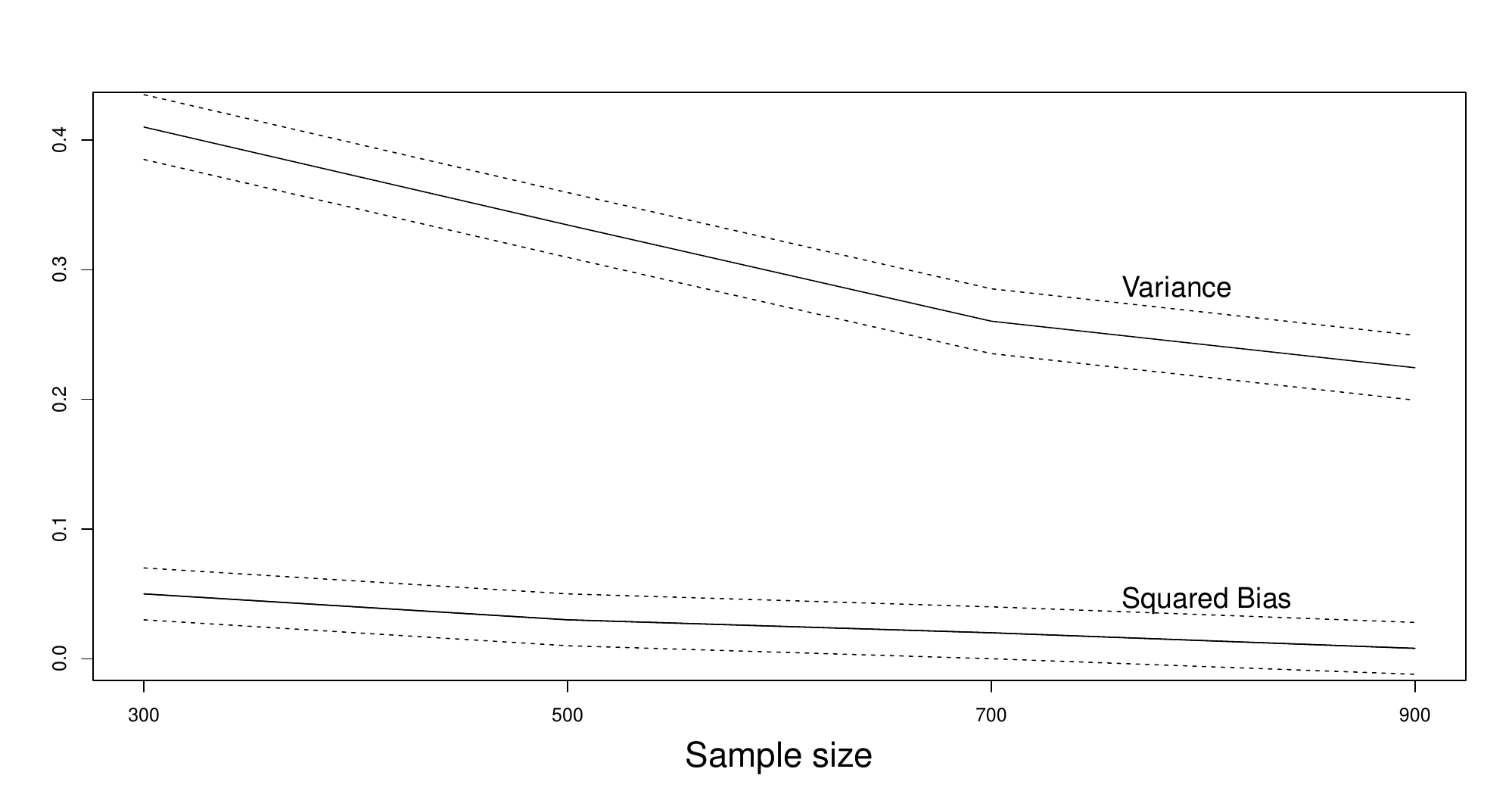}
  \caption{Estimated squared bias and variance of the copula parameter
    vector estimator for various sample sizes at the
    last step of the algorithm. Dashed blue lines represent 95\%
    confidence bands (aka simultaneous confidence intervals) obtained
    from an application of the multivariate central limit theorem to
    the five hundred replications.}
  \label{sq.bias}
\end{figure}
Figure ~\ref{sq.bias} illustrates the behavior of the estimated squared bias and variance of the entire copula parameter vector. Note that, for smaller values of the sample size $n$, the variance is several times higher than the squared bias; however, the variance decays at a fairly fast rate afterwards. More specifically, at $n=900$ the variance is approximately $50\%$ to $70\%$  smaller than the variance at $n=300$. This suggests the convergence rate somewhat slower than the classical parametric rate of $\frac{1}{n}$. 

\begin{figure}[h!]
  \centering
  \caption{True and estimated marginal densities of the three clusters
    and the two dimensions for $n=900$.The marginal estimates are obtained by integrating 
    estimated bivariate densities.}
  \subfloat[]{
    \label{fig:subfig:d1c1}
    \includegraphics[width=.3\textwidth]{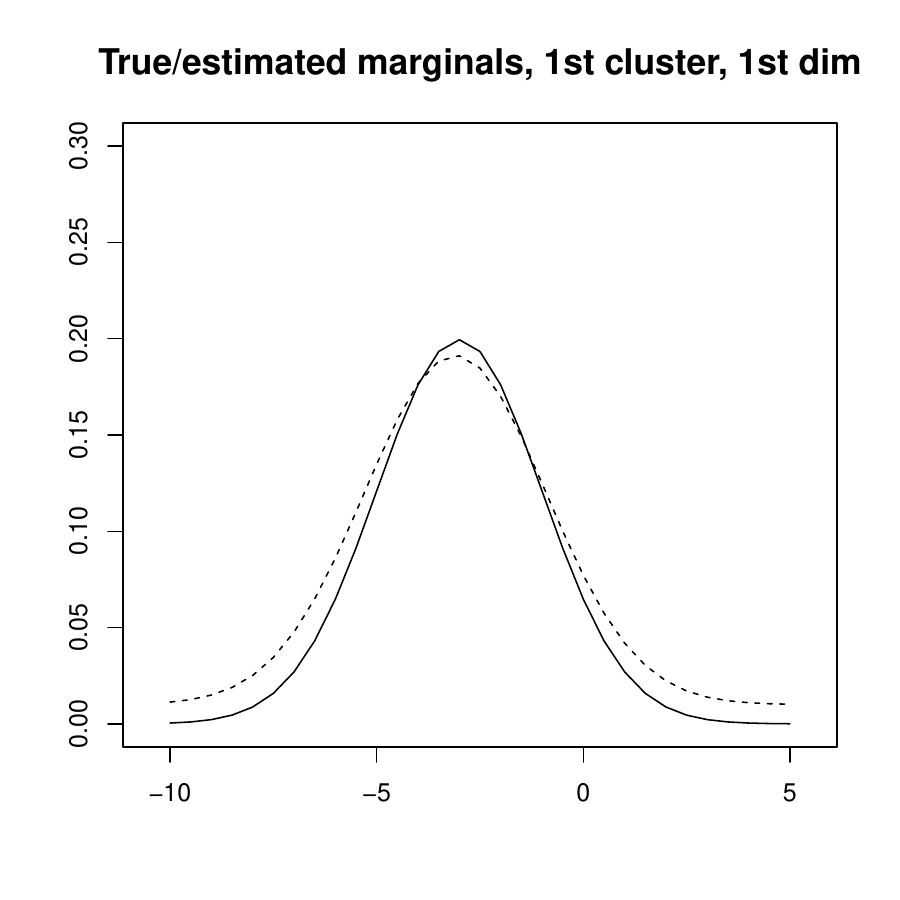 }}
    \hfill
  \subfloat[]{
    \label{fig:subfig:d1c2}
    \includegraphics[width=.3\textwidth]{ 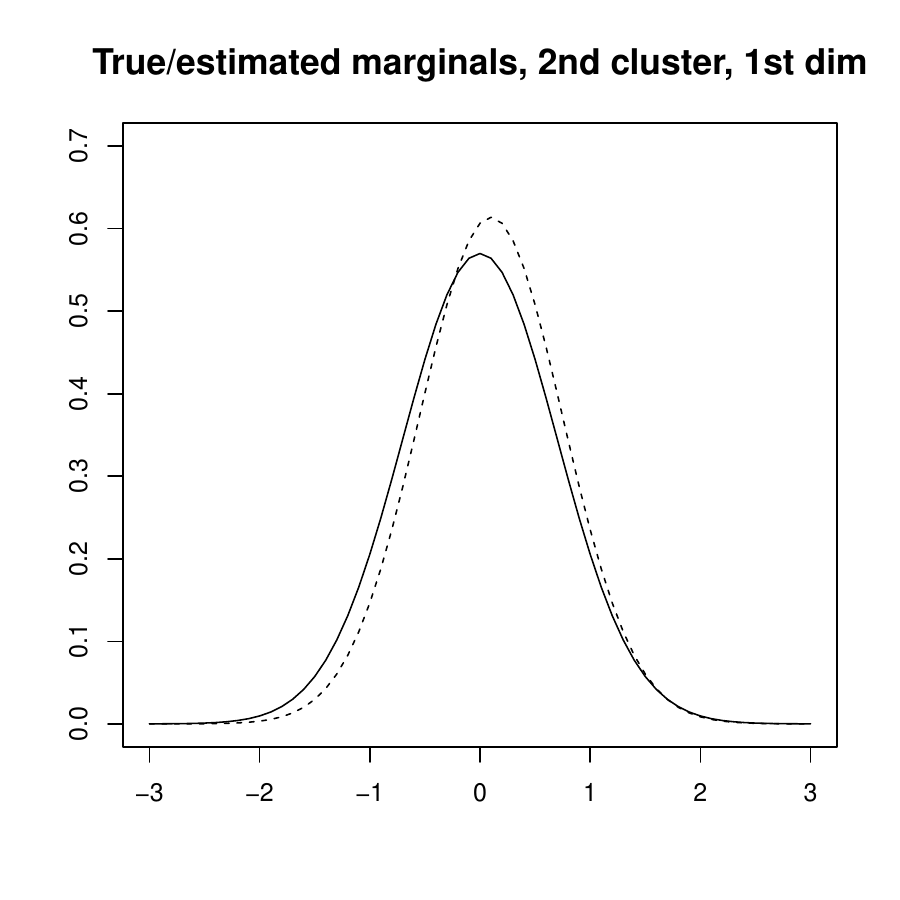}}
    \hfill
  \subfloat[]{
    \label{fig:subfig:d1c3}
    \includegraphics[width=.3\textwidth]{ 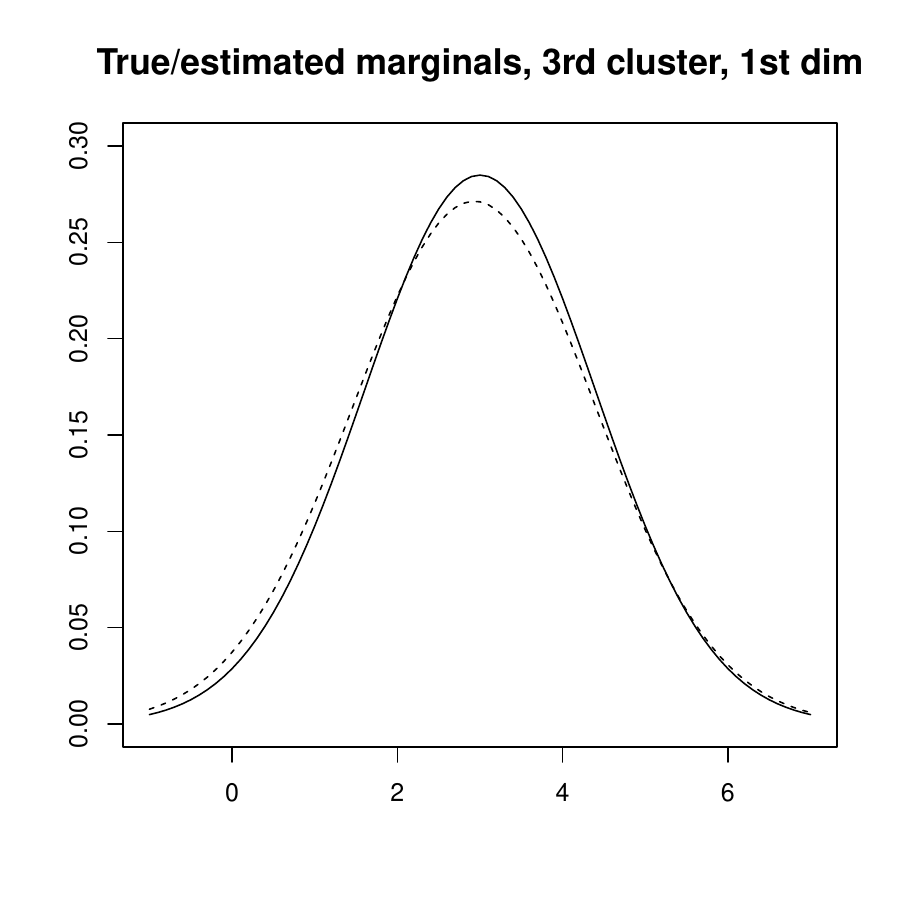} }
    \hfill
  \subfloat[]{
    \label{fig:subfig:d2c1}
   \includegraphics[width=.3\textwidth]{ 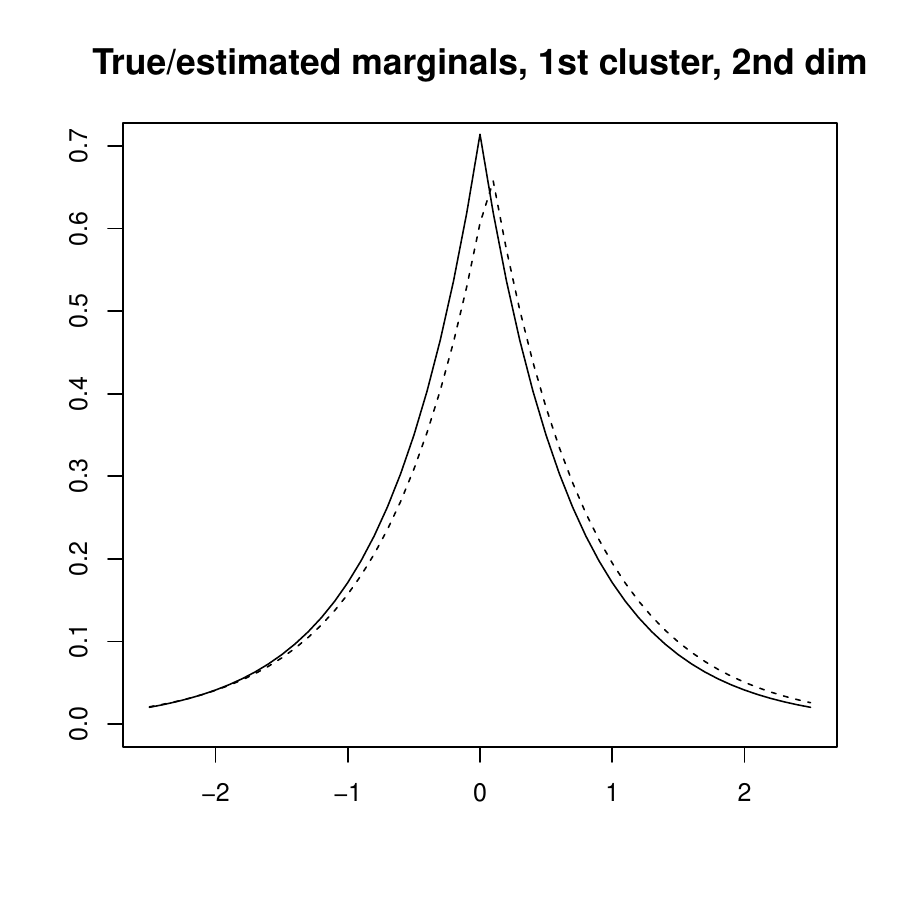} }
   \hfill
   \subfloat[]{
   \label{fig:subfig:d2c2}
   \includegraphics[width=.3\textwidth]{ 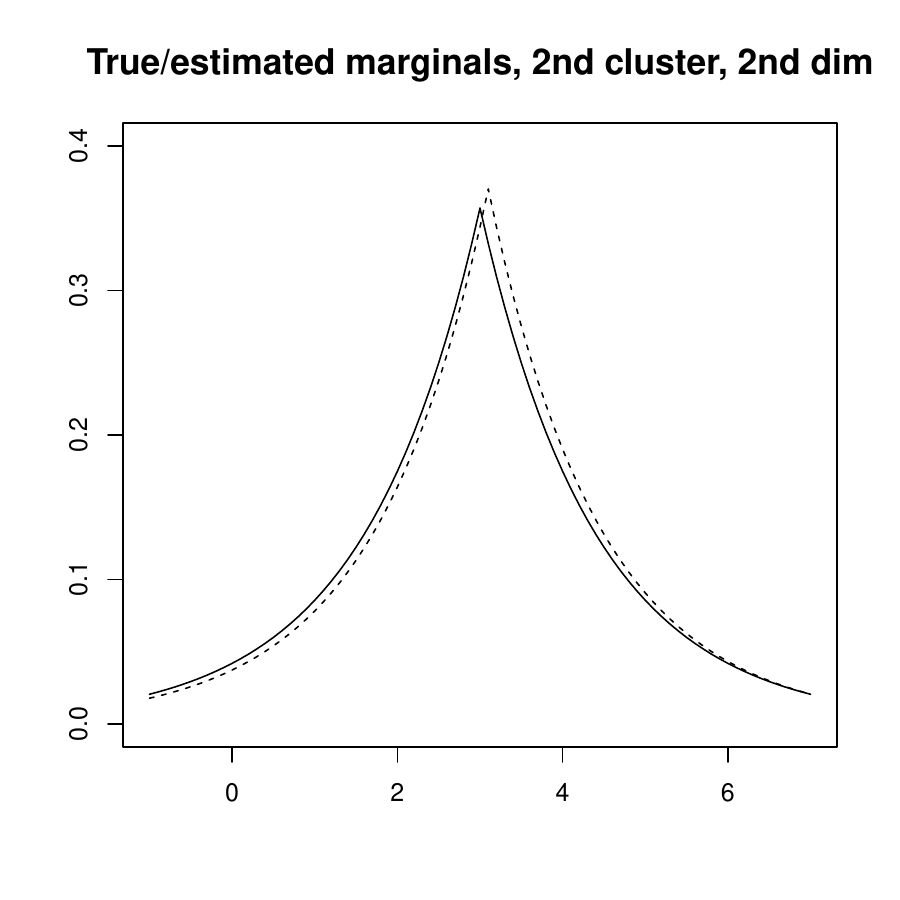} }
   \hfill
   \subfloat[]{
   \label{fig:subfig:d2c3}
   \includegraphics[width=.3\textwidth]{ 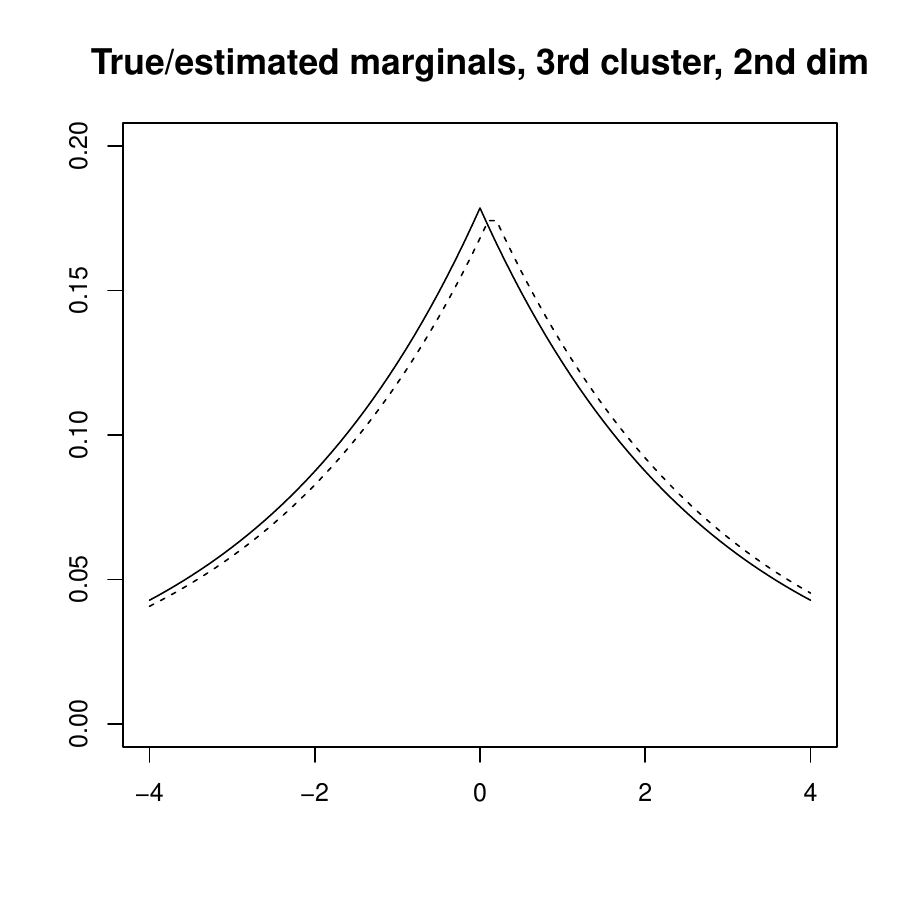} }
  \thisfloatpagestyle{empty}
  \label{marginals_kmeans}
\end{figure}
Figure ~\ref{marginals_kmeans} illustrates estimated marginal densities at the last step of the algorithm for the sample size $n=900$, for the last replication. Each marginal density $f_{jk}(x)$ is superimposed on the estimate $\hat f_{jk}(x)$, obtained at the last step of the algorithm, on a separate plot for $k=1,2$ and $j=1,2,3$. In every one of these plots, a solid line denotes the true density function and the dashed line denotes an estimated one. Note that the proposed algorithm, unlike the algorithm of \cite{levine2024smoothed}, only estimates the multivariate density functions $f_{j}(\vv y)$ directly. The corresponding marginal densities $f_{jk}(y_k)$, $k=1,\ldots,d$ are obtained using numerical marginal integration of the estimated multivariate densities. One can expect this to produce somewhat less precise estimates of the marginal densities than the algorithm of \cite{levine2024smoothed}. To our surprise, the resulting estimates gave a rather good fit to the true densities. To perform our calculations, we used R version 4.4.2 on a Dell laptop with $32$ GB RAM and Intel(R) Core(TM) Ultra 7 165U processor. Every iteration of our algorithm took about $1$ minute with the sample size $n=900$. 

\begin{figure}[h]
  \centering
  \caption{True and estimated marginal densities of the three clusters
    and the two dimensions for $n=500$. The marginal estimates are obtained by integrating 
    estimated bivariate densities.}
  \subfloat[]{
    \label{fig:subfig:d1c1500}
    \includegraphics[width=.3\textwidth]{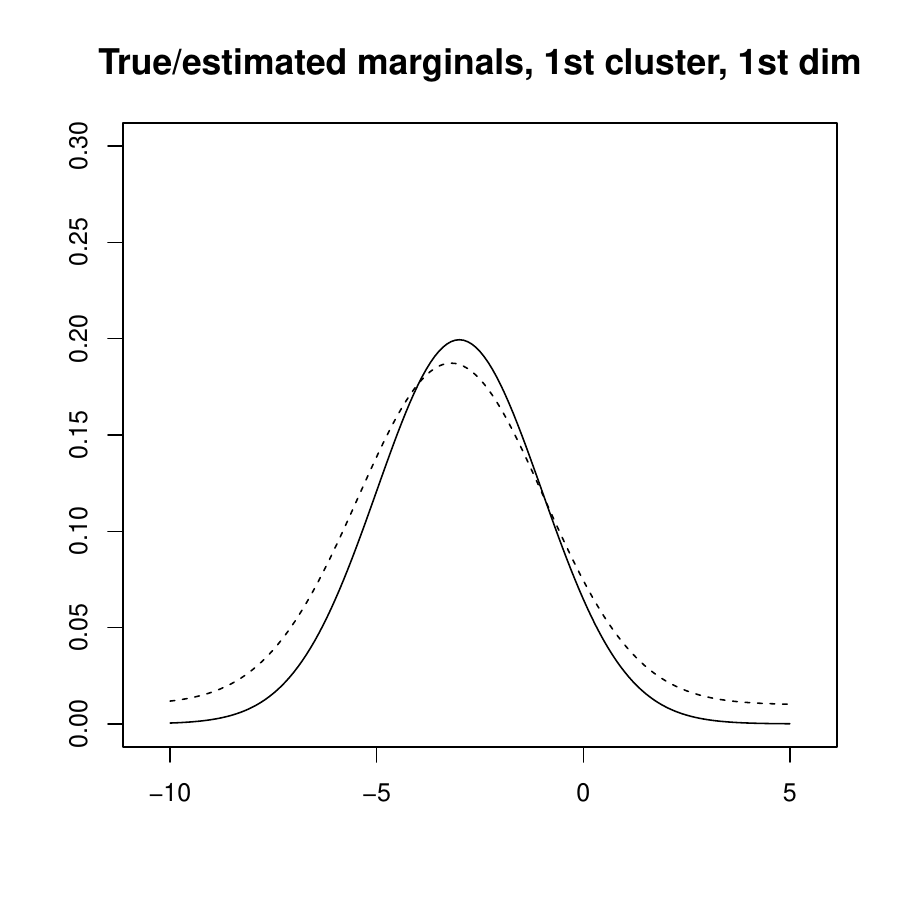 }}
    \hfill
  \subfloat[]{
    \label{fig:subfig:d1c2500}
    \includegraphics[width=.3\textwidth]{ 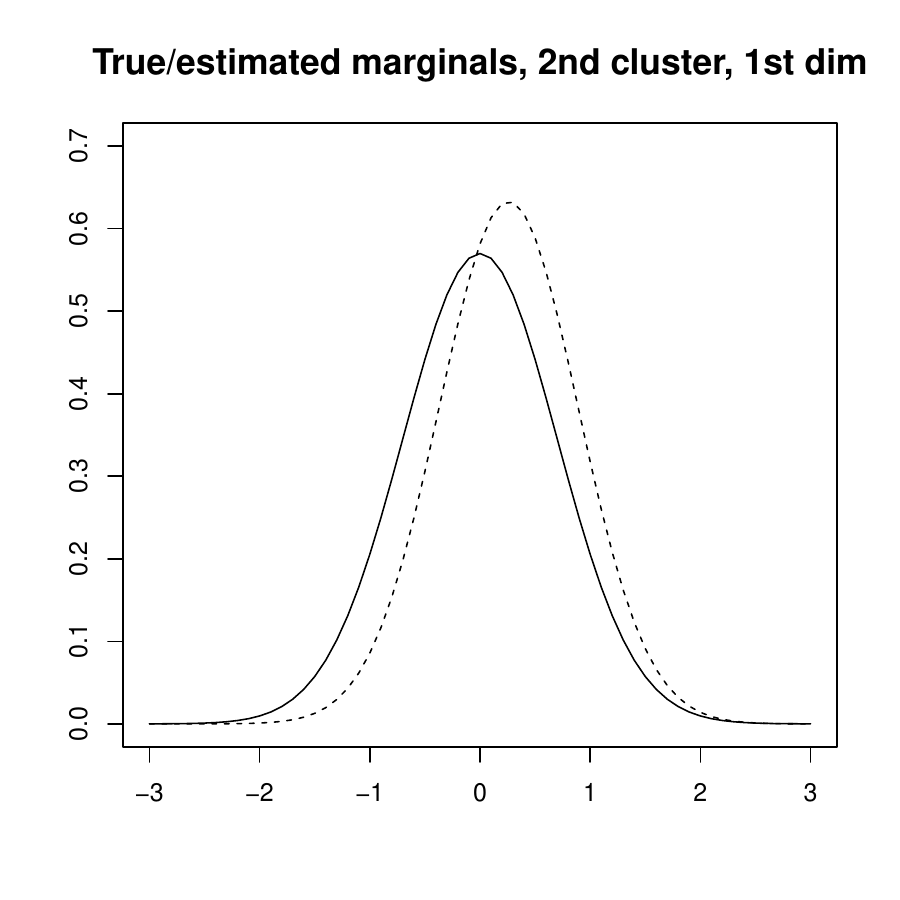}} 
    \hfill
  \subfloat[]{
    \label{fig:subfig:d1c3500}
    \includegraphics[width=.3\textwidth]{ 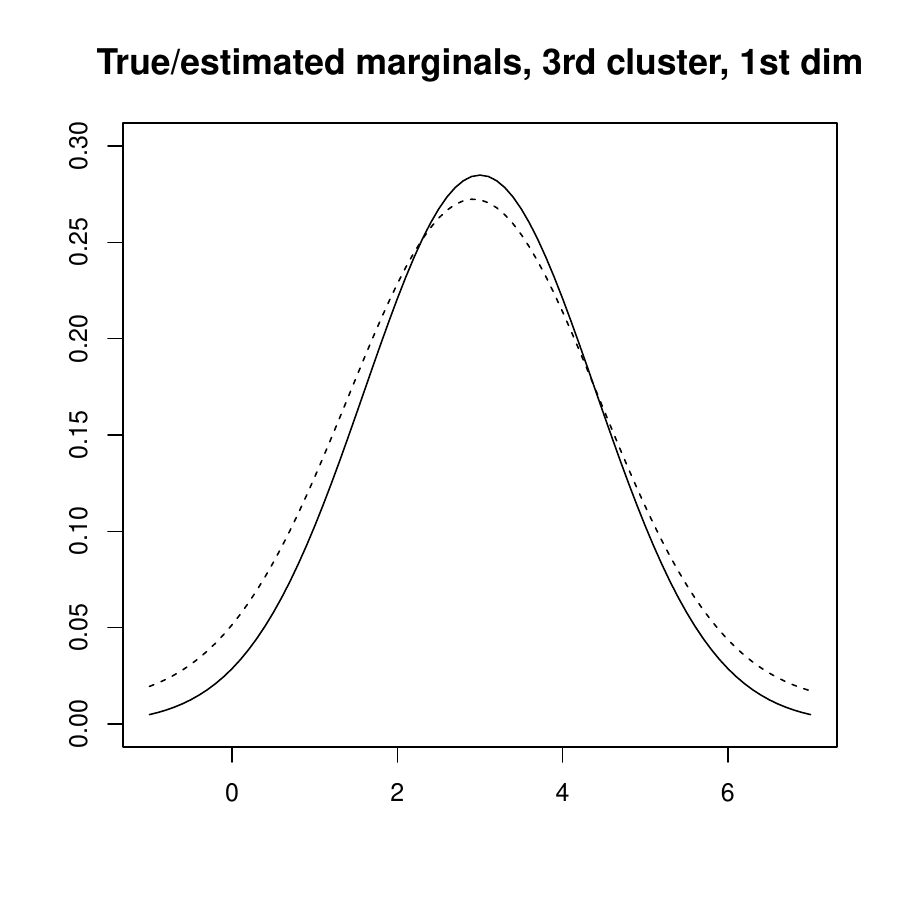} }
    \hfill
  \subfloat[]{
    \label{fig:subfig:d2c1500}
   \includegraphics[width=.3\textwidth]{ 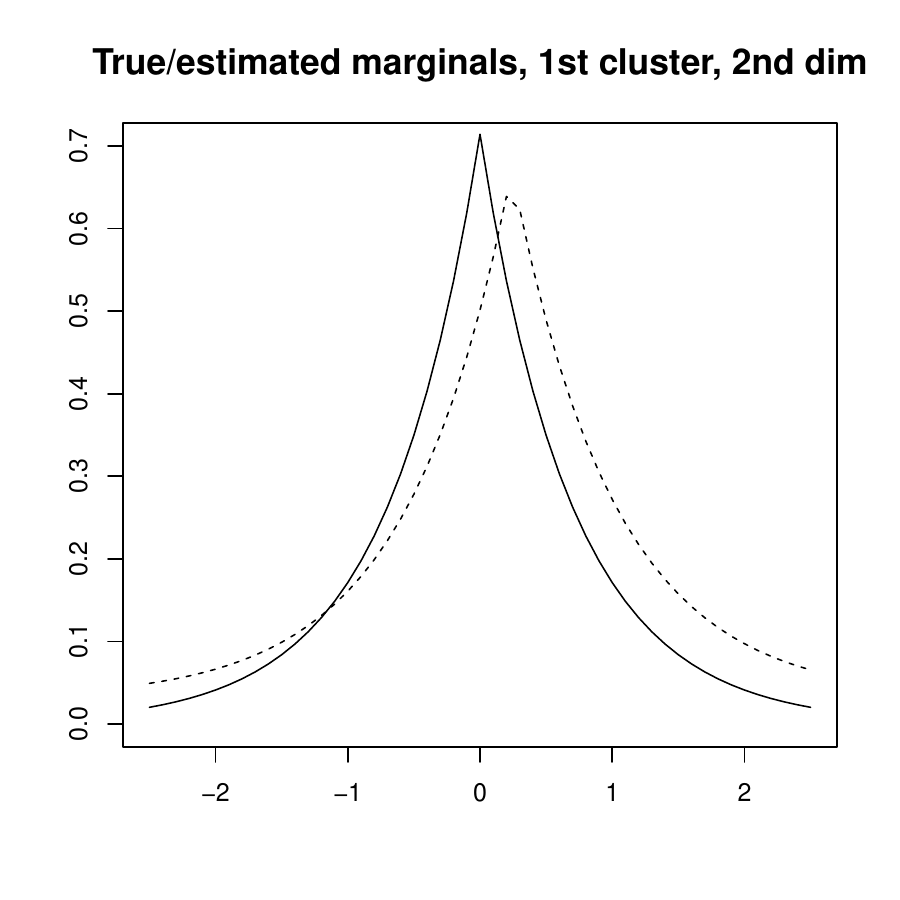} }
   \hfill
   \subfloat[]{
   \label{fig:subfig:d2c2500}
   \includegraphics[width=.3\textwidth]{ 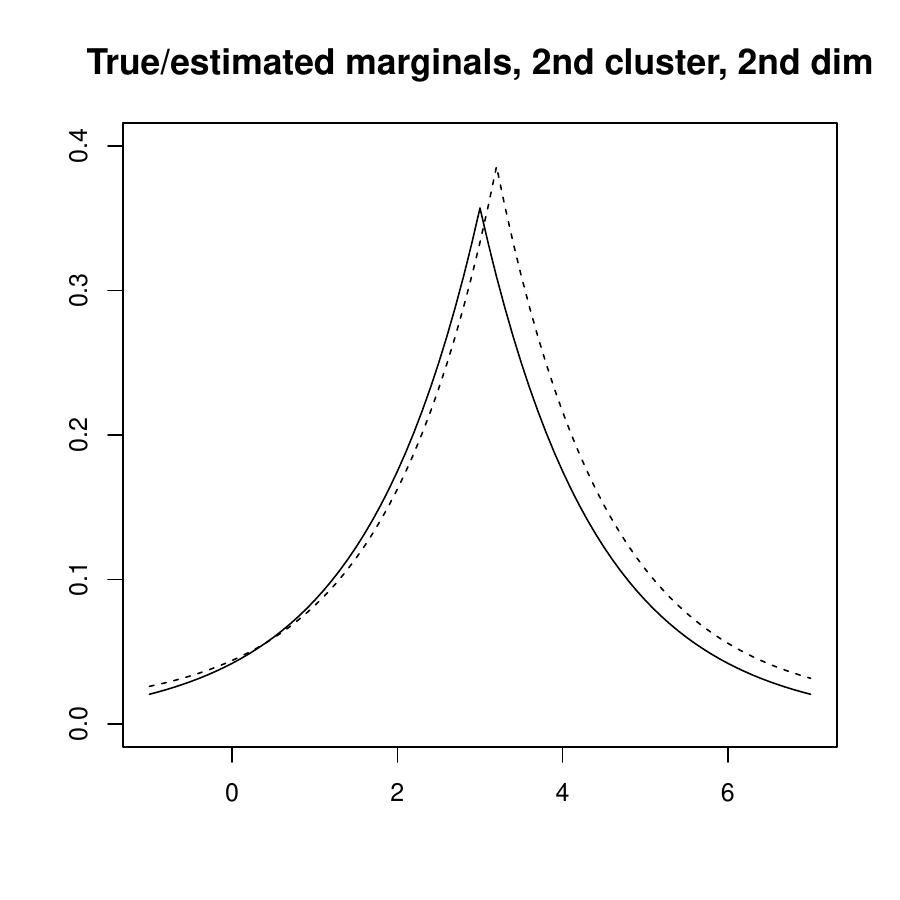} }
   \hfill
   \subfloat[]{
   \label{fig:subfig:d2c3500}
   \includegraphics[width=.3\textwidth]{ 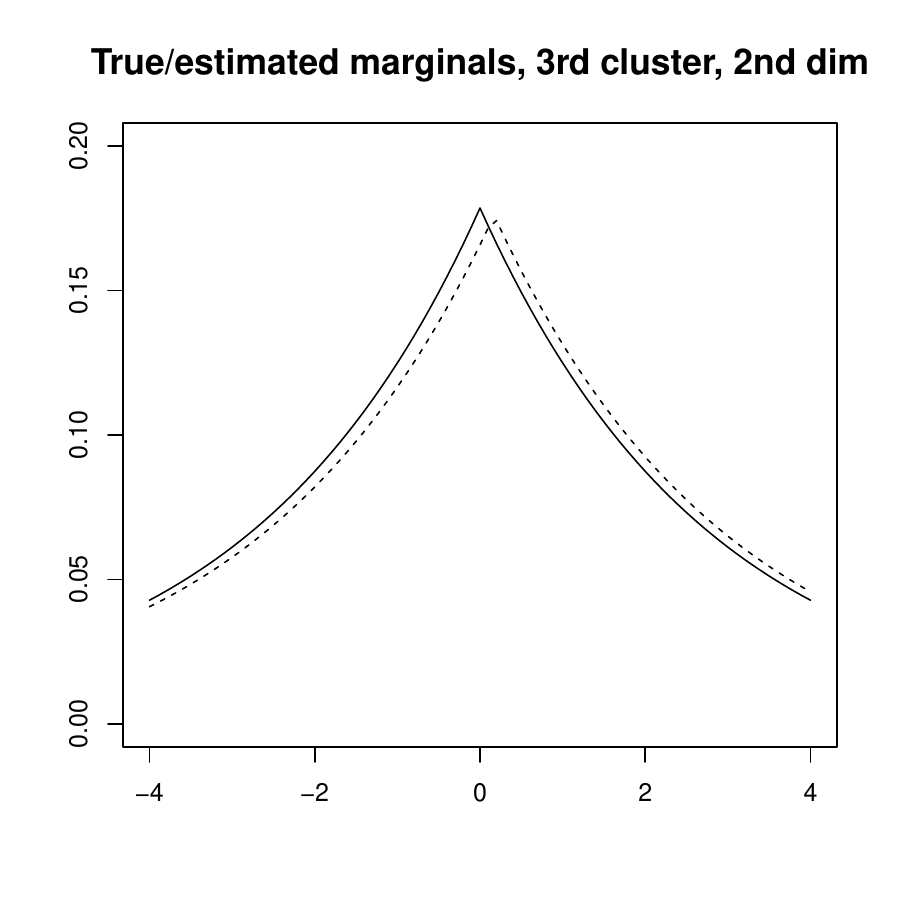} }
   \thisfloatpagestyle{empty}   
  \label{marginals_kmeans_500}
\end{figure}
We have also included another Figure ~\ref{marginals_kmeans_500} to illustrate how well the marginal densities can be estimated at a lower sample size $n=500$. Note that the performance of our algorithm has now deteriorated somewhat. All of the marginal density estimates show noticeable bias compared to the case $n=900$. Moreover, the estimate of the Laplace density in the second dimension of the first component show inability to fit the cusp of the distribution correctly, presenting what seems to be a smoothing artefact due to a smaller sample size. 

As is the case for all the algorithms of this type, its convergence is of the local nature. Due to this, we also conduct a brief investigation to assess the impact of initialization on the results produced by the proposed algorithm. To do so, we use the last dataset generated in the previous simulation experiment with $n=900$. Initialization of the algorithm is now performed using a Gaussian mixture model with independent components. In other words, in the first step of the algorithm the marginal vector $\vv f^{0}$ and the vector of probability weights $\vv \lambda^{0}$ are obtained as estimates of components and weights of a Gaussian mixture model with independent marginals. The results are shown in the Figure ~\ref{marginals_kmeans_900_G}. 
\begin{figure}[h]
  \centering
  \caption{True and estimated marginal densities of the three clusters
    and the two dimensions for $n=900$. The marginal estimates are obtained by integrating 
    estimated bivariate densities. The initialization using generalized Gaussian mixtures is used.}
  \subfloat[]{
    \label{fig:subfig:d1c1900_G}
    \includegraphics[width=.3\textwidth]{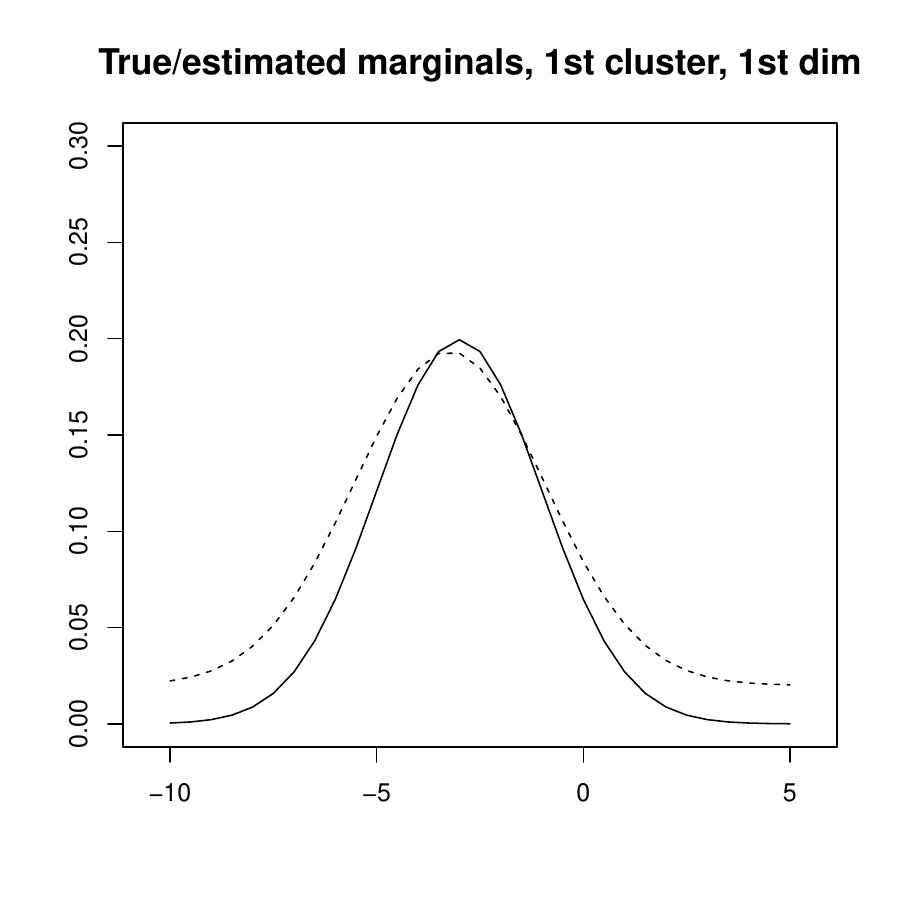 }}
    \hfill
  \subfloat[]{
    \label{fig:subfig:d1c2900_G}
    \includegraphics[width=.3\textwidth]{ 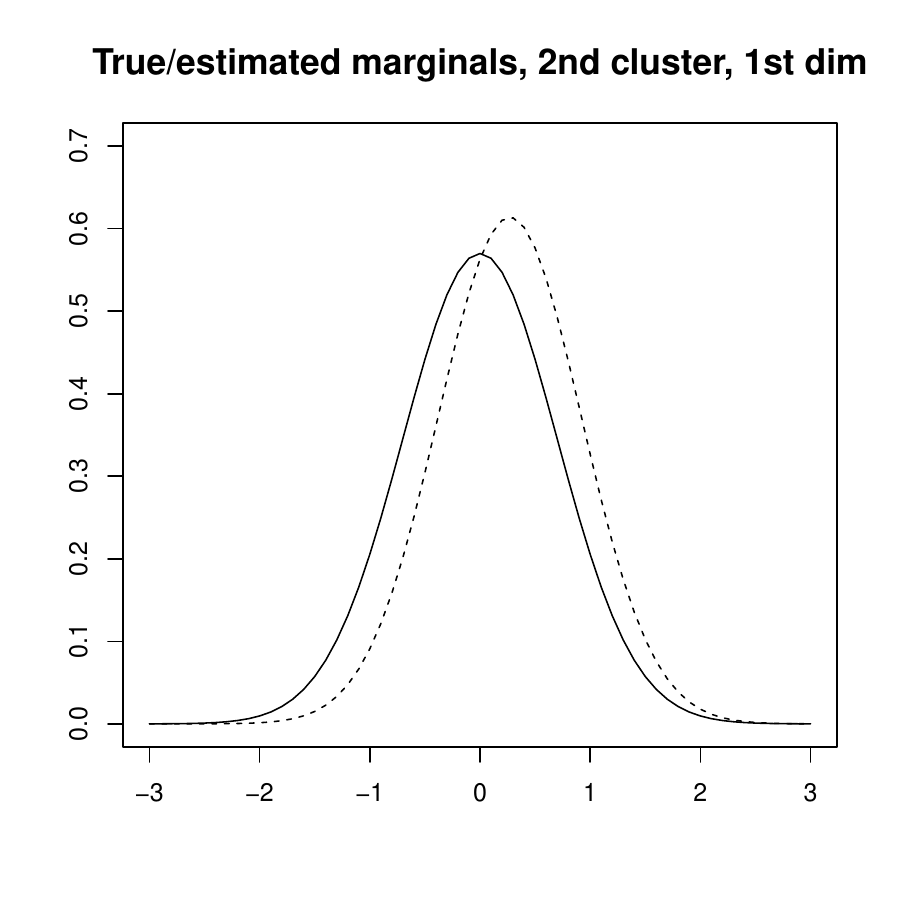} }
    \hfill
  \subfloat[]{
    \label{fig:subfig:d1c3900_G}
    \includegraphics[width=.3\textwidth]{ 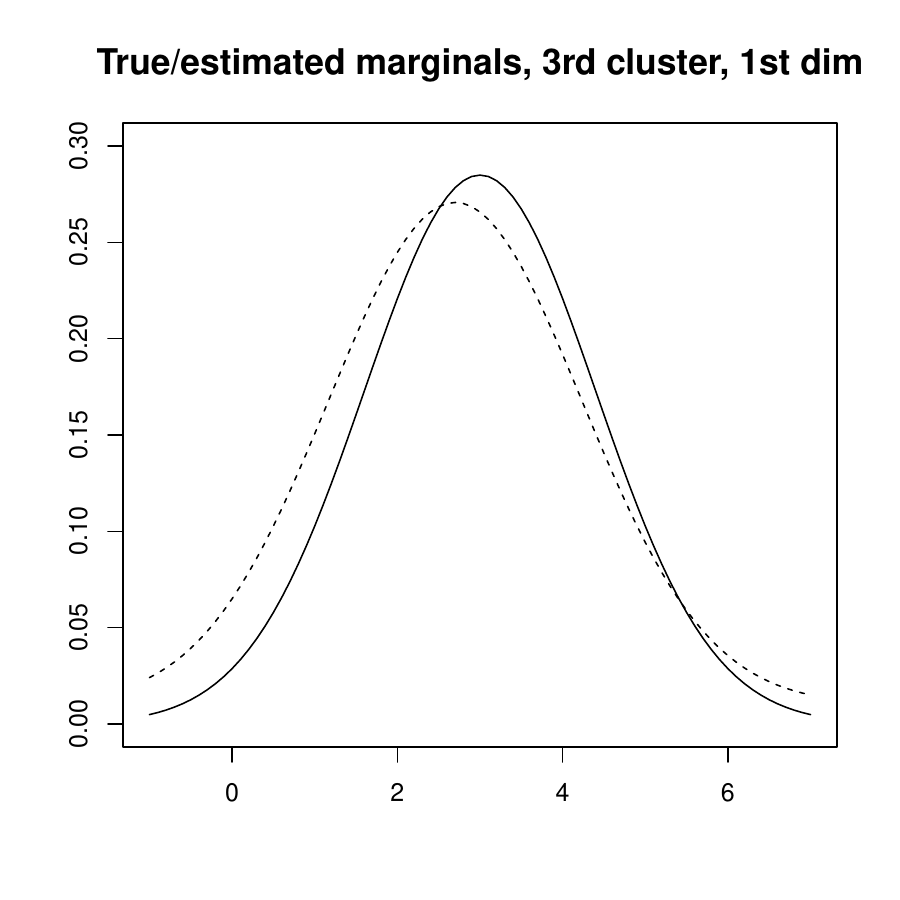} }
    \hfill
  \subfloat[]{
    \label{fig:subfig:d2c1900_G}
   \includegraphics[width=.3\textwidth]{ 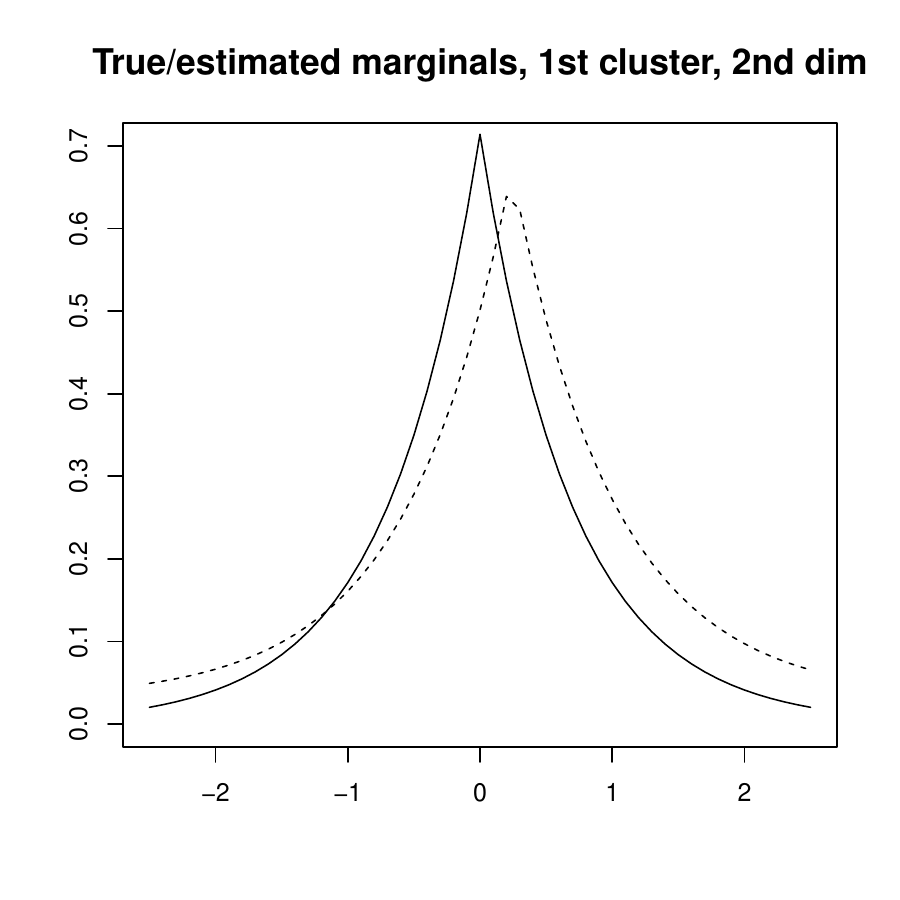} }
   \hfill
   \subfloat[]{
   \label{fig:subfig:d2c2900_G}
   \includegraphics[width=.3\textwidth]{ 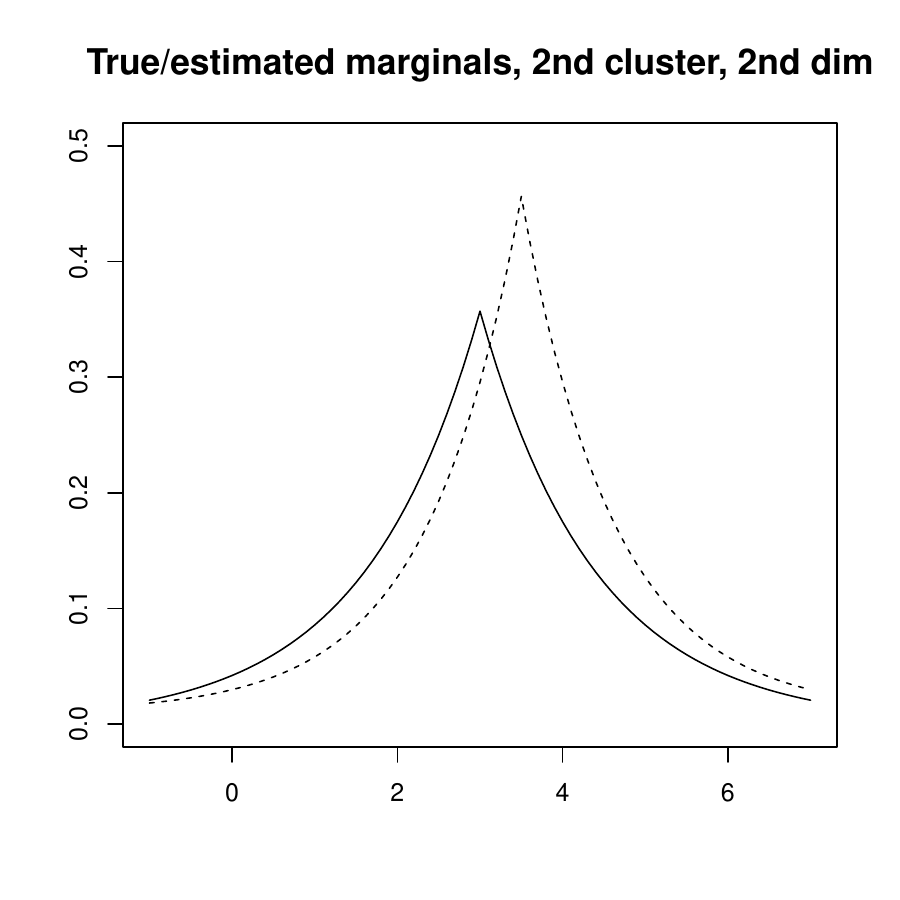} }
   \hfill
   \subfloat[]{
   \label{fig:subfig:d2c3900_G}
   \includegraphics[width=.3\textwidth]{ 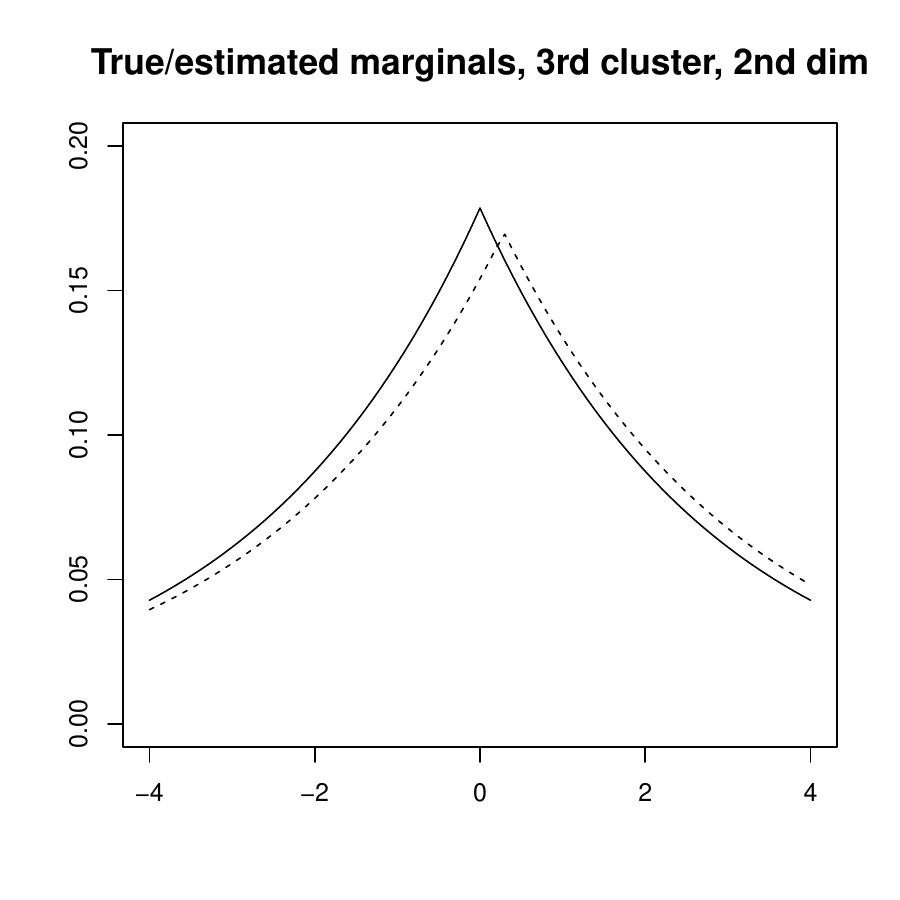} }
  \label{marginals_kmeans_900_G}
 \thisfloatpagestyle{empty} 
\end{figure}
In this case, the quality of estimates of marginal densities has deteriorated considerably; in particular, it is especially noticeable when looking at estimates of the first marginal of the first and third components, respectively, and comparing them to corresponding plots in Figures ~\ref{marginals_kmeans} and ~\ref{marginals_kmeans_500}. This suggests that initialization does play a role in the performance of our algorithm. This is also confirmed by the results of the study of copula parameter estimation. Figure ~\ref{copula.parameters} shows the evolution of values of the three components of the estimated copula parameter vector as the number of iterations increases. On the right-hand side, we see that, with the Gaussian mixture initialization, two out of the three parameters seem to be converging to the wrong limits of $0.2$ and $-0.2$ respectively; only one of the estimates seem to be stable at around the right limit of $-0.5$. On the contrary, the $k$-means initialized process shows a much better behavior, with all three estimates converging to limits that seem approximately correct. 

\begin{figure}[h]
  \centering
  \subfloat[]{\label{fig:coppar-across-n900:subfig:GMM}
    \includegraphics[width=.5\textwidth]{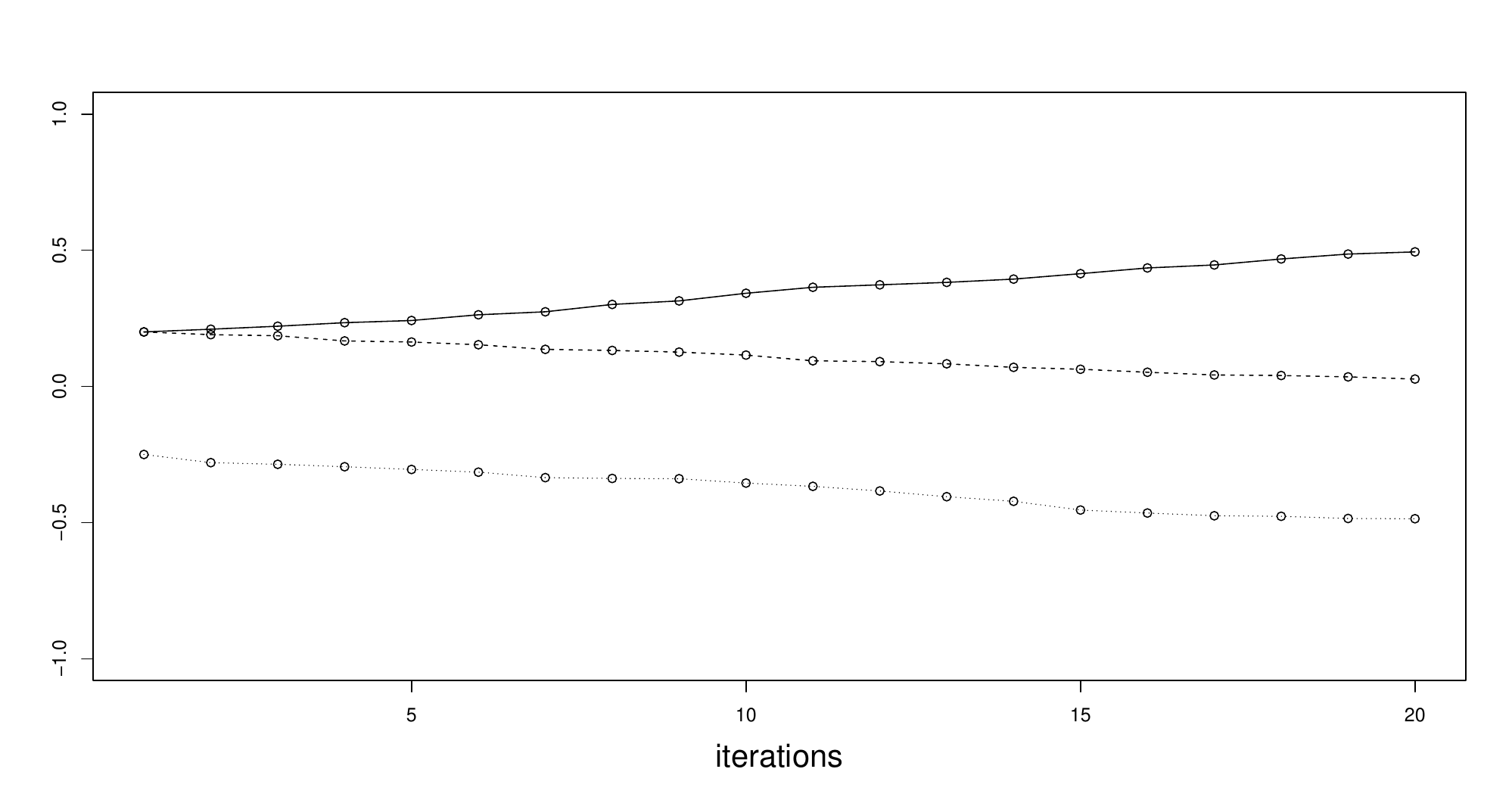}}
  \subfloat[]{\label{fig:coppar-across-n900:subfig:KNN}
    \includegraphics[width=.5\textwidth]{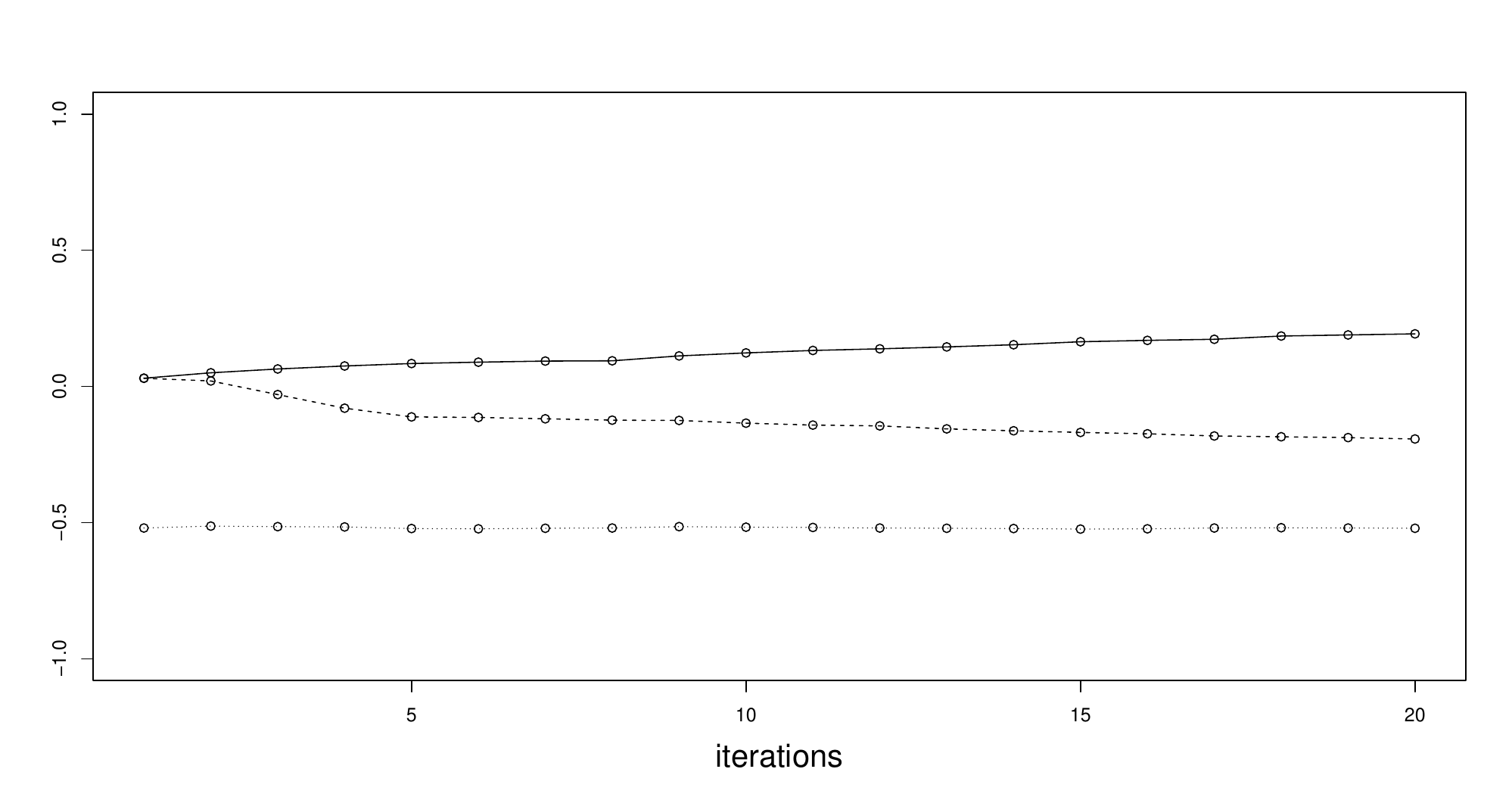}}
  \caption{Values of the three components of the estimated copula
    parameter vector across iterations for the last dataset generated
    in Section~\ref{sim_section} with $n=900$ and initialization by fitting
    \protect\subref{fig:coppar-across-n900:subfig:GMM} a $k$-means algorithm and \protect\subref{fig:coppar-across-n900:subfig:KNN}
    a Gaussian mixture model. Solid, dashed and dotted lines correspond to estimators of the true parameter equal to $0.5$, $0$ and $-0.5$, respectively}
  \label{copula.parameters}
\end{figure}

Note that the proposed algorithm has approximately the same relationship with the algorithm algorithm of \cite{levine2024smoothed} as the monotonic npMSL algorithm of \cite{Biometrika_2011LevineHunterChauveau} to the earlier algorithm of \cite{benaglia2009like}.  In other words, it proposes a monotonic (with respect to a specific objective functional) version of the algorithm \cite{levine2024smoothed}. This similarity also extends to a different aspect of the relationship between the current algorithm and that of \cite{levine2024smoothed}. Specifically, in our simulations, neither algorithm showed itself to be consistently superior to the other - exactly the same situation as with \cite{Biometrika_2011LevineHunterChauveau} and \cite{benaglia2009like}. Due to this, and in order to be concise, we do not include these comparisons in our manuscript. 
\section{Real data analysis}

We will illustrate the behavior of our algorithm using the famous iris dataset. As a reminder, it has $n=150$ observations of $d=4$ characteristics of flowers of three species from the genus {\it Iris: Iris setosa, Iris versicolor, Iris virginica }. For simplicity, and following the precedent of \cite{levine2024smoothed}, we will only use two of these characteristics as variables: the sepal length and the petal length. The algorithm proposed in Section ~\ref{algo} is used to perform clustering of this dataset with Gaussian copulas. Also, the stopping criterion described in Section ~\ref{algo} is used. The algorithm is initialized using k-means approach and the fixed bandwidth matrix is used throughout the algorithm.  The reason for this is that, as is the case with the algorithm of \cite{Biometrika_2011LevineHunterChauveau}, the bandwidth matrix has to be kept constant in order for the algorithm to retain the desirable monotonicity property. The bandwidth matrix is selected using a two-stage approach that, at its first stage, uses the choice of the matrix that minimizes the so-called SAMSE (Sum of the Asymptotic Mean Squared Errors) criterion proposed by \cite{duong2003plug}. This approach allows us to avoid selecting diagonal bandwidth matrices only and guarantees that the selected bandwidth matrix is a positive definite one. Selecting the number of clusters in nonparametric density mixtures is a difficult and mostly unsolved problem. Limited results available have only been obtained for models with conditionally independent marginals e.g. \cite{kwon2021estimation} and references therein. Due to this, we only consider the case with the number of clusters equal to the true number of $3$. 

The obtained classification results are reported in Figure ~\ref{fig:iris-classif-clust3:overall}. Note that clustering results obtained using our method are quite close to the true classification and are noticeably better than those from the Gaussian mixture method. Only three observation points end up being misclassified in the {\it Iris setosa} cluster at the lower left side of the plot. These points should have been in the {\it Iris versicolor} class. Our results also compare quite favorably to those of \cite{zhu2019clustering}. They use a rather different algorithm that is based on applying the independent component analysis (ICA) transformation to the original data to make the coordinates as close to being independent as possible as a first step. When that algorithm is applied to the Iris data set in \cite{zhu2019clustering}, seven points out of $150$ end up being misclassified as opposed to the three misclassified points produced by our algorithm. It needs to be also mentioned here that that \cite{zhu2019clustering} use all four variables from the iris dataset for classification purposes whereas our analysis is based on the two variables only.  

\begin{figure}[h!t]
  \centering
  \subfloat[]{\label{fig:iris-classif-clust3}
    \includegraphics[width=.5\textwidth]{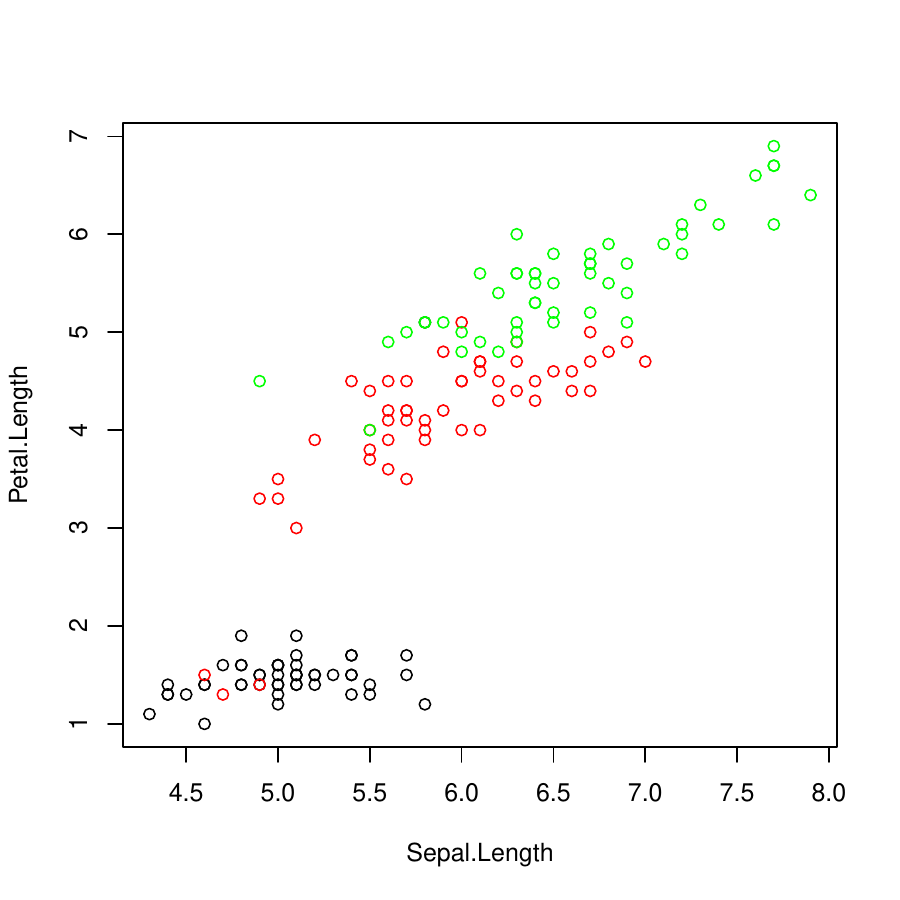}}
  \subfloat[]{\label{fig:iris-classif-gmm3}
  \includegraphics[width=.5\textwidth]{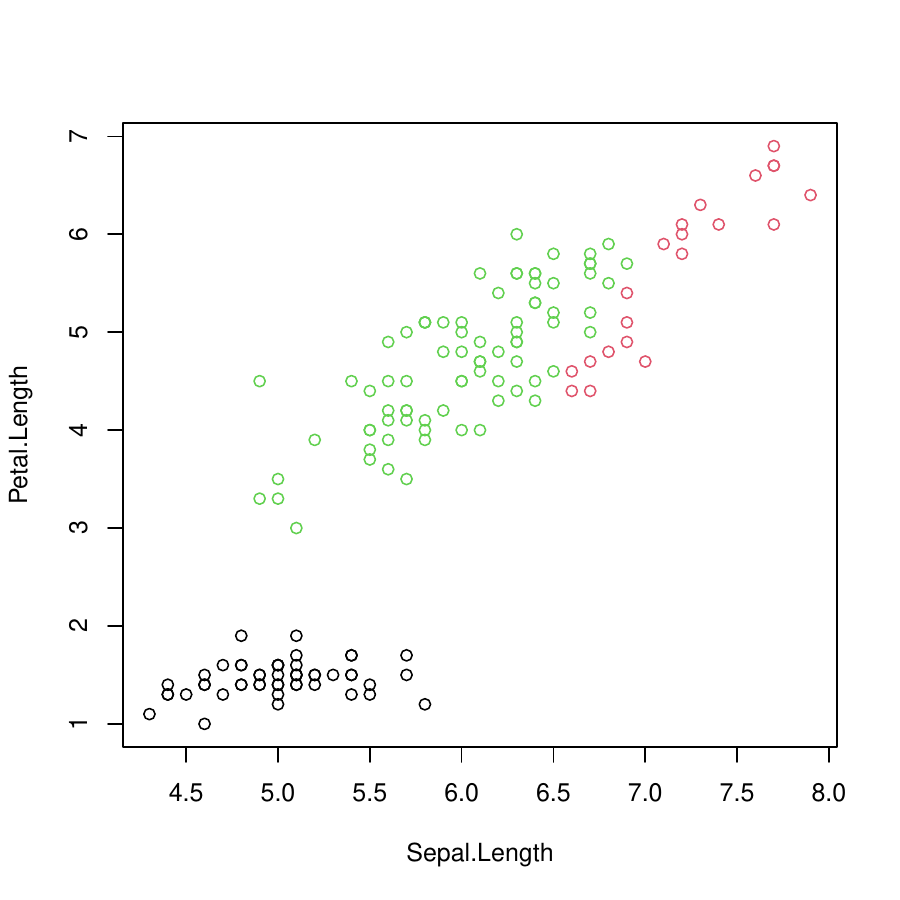}}\\
  \subfloat[]{\label{fig:iris-classif-truth:threeclusters}
  \includegraphics[width=.5\textwidth]{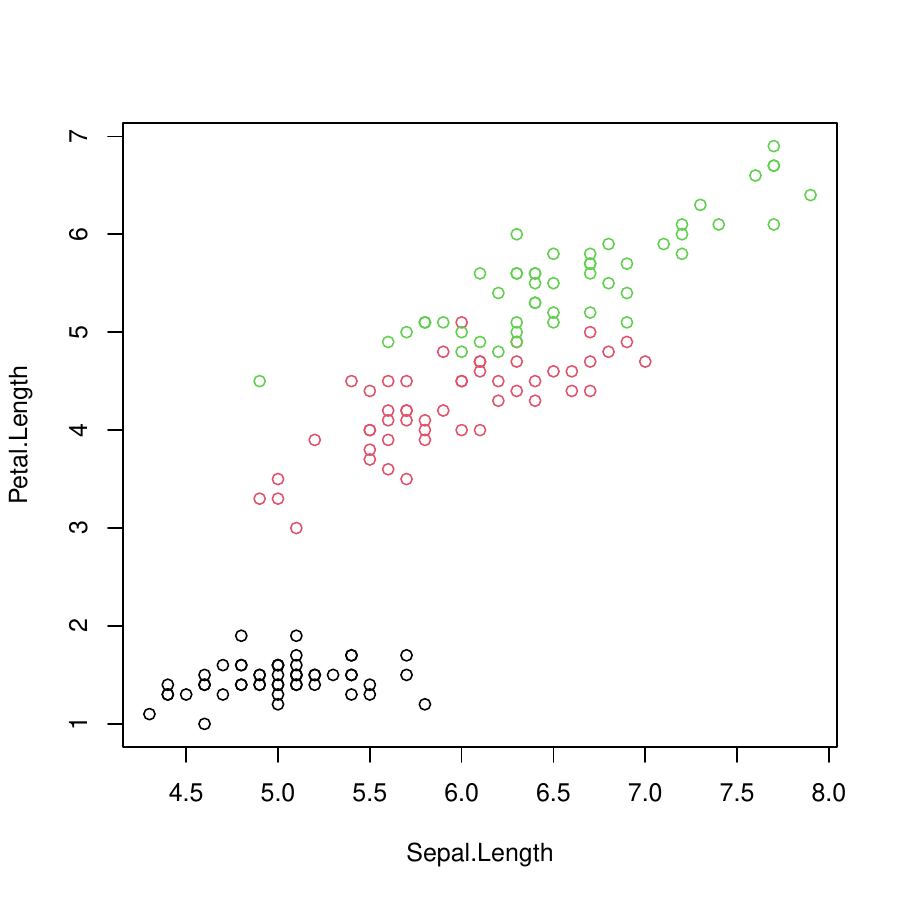}}
  \caption{Iris data: classification 
    based on the choice of 3 clusters. Top left: results of algorithm of
    Section~\ref{algo} with 3 clusters. Top right: results for the Gaussian
    mixture model with 3 clusters. Bottom left: true classification.}
  \label{fig:iris-classif-clust3:overall}
\end{figure}

\section{Conclusion}

The algorithm proposed in this manuscript is designed to estimate the parameters of an arbitrary copula-based semiparametric finite density mixture model. It has been developed to bypass an obvious limitation of the earlier algorithm proposed in \cite{levine2024smoothed} - its lack of monotonicity. The model considered is a general one - the marginal densities are not viewed as belonging to any specific family and the dependence between them is modeled using a copula. The algorithm is a deterministic algorithm of the MM type that shows good performance in illustrative numerical examples. 

As is common in statistical research, the problem considered and the algorithm suggested to solve it raise a number of additional questions worthy of consideration. First, to the best of our knowledge, nothing is known about the identifiability of the model \eqref{model}-\eqref{cmpt} considered in this manuscript. Moreover, even more circumscribed model considered in \cite{mazo2019constraining} where all of the marginal density functions are assumed to belong to a location-scale family is not known to be identifiable either. Investigating identifiability of these models represents an important avenue for future research. 

Another important topic of future research would be to make this algorithm feasible for higher dimensional data. We believe that the starting point of this future research would be an attempt to model dependence of coordinates of multivariate distributions using Archimedean copulas \cite{nelsen2007introduction} pp. $109-155$. There are two reasons for this, in our opinion. The first one is that Archimedean copulas can be written down in closed form which makes optimization considerably easier. Second, they can model a fairly wide variety of dependence structures using only one- or two-dimensional parameters.  

As has been pointed out earlier in e.g. \cite{Biometrika_2011LevineHunterChauveau}, there is a certain sense in updating the bandwidth matrix at every step of iteration since this takes into account updated estimates of component densities. The same remark is also applicable to updating values of copula parameters $\rho_{j}$, $j=1,\ldots,m$ used to define the smoothing kernel. At the same time, such an approach would violate the monotonicity of the proposed algorithm. A possible research question would be to quantify the resulting loss of monotonicity with respect to the objective functional considered in this manuscript and, moreover, to investigate just how substantial such a violation of monotonicity is depending on, for example, the specific mechanism of the bandwidth matrix selection. The most pressing question would be, then, to show if the algorithm still converges despite the loss of monotonicity property. 





\clearpage
\bibliographystyle{plainnat} 
\bibliography{reference}   

\begin{thebibliography}{33}
\providecommand{\natexlab}[1]{#1}
\providecommand{\url}[1]{\texttt{#1}}
\expandafter\ifx\csname urlstyle\endcsname\relax
  \providecommand{\doi}[1]{doi: #1}\else
  \providecommand{\doi}{doi: \begingroup \urlstyle{rm}\Url}\fi

\bibitem[Aas et~al.(2009)Aas, Czado, Frigessi, and Bakken]{aas_czado}
K.~Aas, C.~Czado, A.~Frigessi, and H.~Bakken.
\newblock Pair-copula constructions of multiple dependence.
\newblock \emph{Insurance: mathematics and economics}, 44\penalty0
  (2):\penalty0 182--198, 2009.

\bibitem[Allman et~al.(2009)Allman, Matias, and
  Rhodes]{Allman_Matias_Rhodes_2009}
E.~S. Allman, C.~Matias, and J.~A. Rhodes.
\newblock Identifiability of parameters in latent structure models with many
  observed variables.
\newblock \emph{The Annals of Statistics}, 37\penalty0 (6):\penalty0
  3099--3132, 2009.

\bibitem[Arakelian and Karlis(2014)]{arakelian2014clustering}
Veni Arakelian and Dimitris Karlis.
\newblock Clustering dependencies via mixtures of copulas.
\newblock \emph{Communications in Statistics-Simulation and Computation},
  43\penalty0 (7):\penalty0 1644--1661, 2014.

\bibitem[Benaglia et~al.(2009)Benaglia, Chauveau, and Hunter]{benaglia2009like}
Tatiana Benaglia, Didier Chauveau, and David~R Hunter.
\newblock An {EM}-like algorithm for semi-and nonparametric estimation in
  multivariate mixtures.
\newblock \emph{Journal of Computational and Graphical Statistics}, 18\penalty0
  (2):\penalty0 505--526, 2009.

\bibitem[Bonhomme et~al.(2016)Bonhomme, Jochmans, and Robin]{Bonhomme_2016}
S.~Bonhomme, K.~Jochmans, and J.~M. Robin.
\newblock Non-parametric estimation of finite mixtures from repeated
  measurements.
\newblock \emph{Journal of the Royal Statistical Society: Series
  B(Methodological)}, 76\penalty0 (1):\penalty0 211--229, 2016.

\bibitem[Chauveau and Hoang(2016)]{chauveau2016nonparametric}
Didier Chauveau and Vy~Thuy~Lynh Hoang.
\newblock Nonparametric mixture models with conditionally independent
  multivariate component densities.
\newblock \emph{Computational Statistics \& Data Analysis}, 103:\penalty0
  1--16, 2016.

\bibitem[Chauveau et~al.(2015)Chauveau, Hunter, and Levine]{chauveau2015semi}
Didier Chauveau, David~R Hunter, and Michael Levine.
\newblock Semi-parametric estimation for conditional independence multivariate
  finite mixture models.
\newblock \emph{Statistics Surveys}, 9:\penalty0 1--31, 2015.

\bibitem[Duong(2007)]{duong2007ks}
Tarn Duong.
\newblock ks: Kernel density estimation and kernel discriminant analysis for
  multivariate data in {R}.
\newblock \emph{Journal of Statistical Software}, 21:\penalty0 1--16, 2007.

\bibitem[Duong and Hazelton(2003)]{duong2003plug}
Tarn Duong and Martin Hazelton.
\newblock Plug-in bandwidth matrices for bivariate kernel density estimation.
\newblock \emph{Journal of Nonparametric Statistics}, 15\penalty0 (1):\penalty0
  17--30, 2003.

\bibitem[Eggermont et~al.(2001)Eggermont, LaRiccia, and
  LaRiccia]{eggermont2001maximum}
Paulus Petrus~Bernardus Eggermont, Vincent~N LaRiccia, and VN~LaRiccia.
\newblock \emph{Maximum penalized likelihood estimation}, volume~1.
\newblock Springer, New York, 2001.

\bibitem[Hall and Zhou(2003)]{Hall_Zhou}
P.~Hall and X.~Zhou.
\newblock Nonparametric estimation of component distributions in multivariate
  mixture.
\newblock \emph{The Annals of Statistics}, 31\penalty0 (1):\penalty0 201--224,
  2003.

\bibitem[Hanche-Olsen and Holden(2010)]{hanche2010kolmogorov}
Harald Hanche-Olsen and Helge Holden.
\newblock The kolmogorov-riesz compactness theorem.
\newblock \emph{Expositiones Mathematicae}, 28\penalty0 (4):\penalty0 385--394,
  2010.

\bibitem[Hunt and Jorgensen(2003)]{hunt2003mixture}
Lynette Hunt and Murray Jorgensen.
\newblock Mixture model clustering for mixed data with missing information.
\newblock \emph{Computational statistics \& Data Analysis}, 41\penalty0
  (3-4):\penalty0 429--440, 2003.

\bibitem[Hunter(2024)]{hunter2024unsupervised}
David~R Hunter.
\newblock Unsupervised clustering using nonparametric finite mixture models.
\newblock \emph{Wiley Interdisciplinary Reviews: Computational Statistics},
  16\penalty0 (1):\penalty0 e1632, 2024.

\bibitem[Hunter and Lange(2004)]{hunter2004tutorial}
David~R Hunter and Kenneth Lange.
\newblock A tutorial on {MM} algorithms.
\newblock \emph{The American Statistician}, 58\penalty0 (1):\penalty0 30--37,
  2004.

\bibitem[Kasahara and Shimotsu(2014)]{kasahara2014non}
Hiroyuki Kasahara and Katsumi Shimotsu.
\newblock Non-parametric identification and estimation of the number of
  components in multivariate mixtures.
\newblock \emph{Journal of the Royal Statistical Society Series B: Statistical
  Methodology}, 76\penalty0 (1):\penalty0 97--111, 2014.

\bibitem[Kosmidis and Karlis(2016)]{kosmidis2016model}
Ioannis Kosmidis and Dimitris Karlis.
\newblock Model-based clustering using copulas with applications.
\newblock \emph{Statistics and Computing}, 26\penalty0 (5):\penalty0
  1079--1099, 2016.

\bibitem[Krupskii and Joe(2013)]{kr_joe}
P.~Krupskii and H.~Joe.
\newblock Factor copula models for multivariate data.
\newblock \emph{Journal of Multivariate Analysis}, 120:\penalty0 85--101, 2013.

\bibitem[Kwon and Mbakop(2021)]{kwon2021estimation}
Caleb Kwon and Eric Mbakop.
\newblock Estimation of the number of components of nonparametric multivariate
  finite mixture models.
\newblock \emph{The Annals of Statistics}, 49\penalty0 (4):\penalty0
  2178--2205, 2021.

\bibitem[Levine et~al.(2011)Levine, Hunter, and
  Chauveau]{Biometrika_2011LevineHunterChauveau}
M.~Levine, D.~R. Hunter, and D.~Chauveau.
\newblock Maximum smoothed likelihood for multivariate mixtures.
\newblock \emph{Biometrika}, 98\penalty0 (2), 2011.

\bibitem[Levine and Mazo(2024)]{levine2024smoothed}
Michael Levine and Gildas Mazo.
\newblock A smoothed semiparametric likelihood for estimation of nonparametric
  finite mixture models with a copula-based dependence structure.
\newblock \emph{Computational Statistics}, 39\penalty0 (4):\penalty0
  1825--1846, 2024.

\bibitem[Longla(2024)]{longla2024new}
Martial Longla.
\newblock New copula families and mixing properties.
\newblock \emph{Statistical papers}, 65\penalty0 (7), 2024.

\bibitem[Mazo(2017)]{mazo2017semiparametric}
Gildas Mazo.
\newblock A semiparametric and location-shift copula-based mixture model.
\newblock \emph{Journal of Classification}, 34:\penalty0 444--464, 2017.

\bibitem[Mazo and Averyanov(2019)]{mazo2019constraining}
Gildas Mazo and Yaroslav Averyanov.
\newblock Constraining kernel estimators in semiparametric copula mixture
  models.
\newblock \emph{Computational Statistics \& Data Analysis}, 138:\penalty0
  170--189, 2019.

\bibitem[Narasimhan et~al.(2025)Narasimhan, Koller, Johnson, Hahn, Bouvier,
  Ki{\^e}u, Gaure, and Narasimhan]{narasimhan2025package}
Balasubramanian Narasimhan, Manuel Koller, Steven~G Johnson, Thomas Hahn, Annie
  Bouvier, Ki{\^e}n Ki{\^e}u, Simen Gaure, and Maintainer~Balasubramanian
  Narasimhan.
\newblock Package ‘cubature’, 2025.

\bibitem[Nelsen(2007)]{nelsen2007introduction}
Roger~B Nelsen.
\newblock \emph{An introduction to copulas}.
\newblock Springer Science \& Business Media, 2007.

\bibitem[Rau et~al.(2015)Rau, Maugis-Rabusseau, Martin-Magniette, and
  Celeux]{rau2015co}
Andrea Rau, Cathy Maugis-Rabusseau, Marie-Laure Martin-Magniette, and Gilles
  Celeux.
\newblock Co-expression analysis of high-throughput transcriptome sequencing
  data with poisson mixture models.
\newblock \emph{Bioinformatics}, 31\penalty0 (9):\penalty0 1420--1427, 2015.

\bibitem[Shen et~al.(2018)Shen, Levine, and Shang]{shen2018mm}
Zhou Shen, Michael Levine, and Zuofeng Shang.
\newblock An mm algorithm for estimation of a two component semiparametric
  density mixture with a known component.
\newblock \emph{Electronic Journal of Statistics}, 12:\penalty0 1181--1209,
  2018.

\bibitem[Silverman(1986)]{silverman1986density}
Bernard~W Silverman.
\newblock \emph{Density estimation for statistics and data analysis},
  volume~26.
\newblock CRC press, Boca Raton, Florida, 1986.

\bibitem[Wu and Lange(2010)]{wu2010mm}
Tong~Tong Wu and Kenneth Lange.
\newblock The {MM} alternative to {EM}.
\newblock \emph{Statistical Science}, pages 492--505, 2010.

\bibitem[Xiang et~al.(2019)Xiang, Yao, and Yang]{xiang2019overview}
S.~Xiang, W.~Yao, and G.~Yang.
\newblock An overview of semiparametric extensions of finite mixture models.
\newblock \emph{Statistical Science}, 34\penalty0 (3), 2019.

\bibitem[Zheng and Wu(2020)]{zheng2020nonparametric}
Chaowen Zheng and Yichao Wu.
\newblock Nonparametric estimation of multivariate mixtures.
\newblock \emph{Journal of the American Statistical Association}, 115\penalty0
  (531):\penalty0 1456--1471, 2020.

\bibitem[Zhu and Hunter(2019)]{zhu2019clustering}
Xiaotian Zhu and David~R Hunter.
\newblock Clustering via finite nonparametric {ICA} mixture models.
\newblock \emph{Advances in Data Analysis and Classification}, 13:\penalty0
  65--87, 2019.

\end{thebibliography}
\end{document}